\keywords{probabilistic bisimulation equivalence, pushdown automata}
\tikzset{every picture/.style={>=angle 60}}
\tikzstyle{ran}=[rounded corners,thick,draw,minimum size=1.4em,inner sep=.5ex]
\tikzstyle{tran}=[thick,draw,->]
\begin{document}

\title[Game Characterization of Probabilistic Bisimilarity]{Game Characterization of Probabilistic Bisimilarity, and Applications to Pushdown Automata}

\author[V.~Forejt]{Vojt\v{e}ch Forejt\rsuper{1}}	
\address{\lsuper{1}Diffblue Ltd}	
\email{forejtv@gmail.com}  

\author[P.~Jan\v{c}ar]{Petr Jan\v{c}ar\rsuper{2}}	
\address{\lsuper{2}Dept of Computer Science, Faculty of Science,
Palack\'{y} Univ., Olomouc, Czechia}
\email{petr.jancar@upol.cz}  

\author[S.~Kiefer]{Stefan Kiefer\rsuper{3}}	
\address{\lsuper{3}Dept of Computer Science, University of Oxford, UK}
\email{stekie@cs.ox.ac.uk}

\author[J.~Worrell]{James Worrell\rsuper{3}}
\address{\vskip-7pt}
\email{James.Worrell@cs.ox.ac.uk}





\begin{abstract}
\noindent
We study the bisimilarity problem for probabilistic pushdown automata
(pPDA) and subclasses thereof.  Our definition of pPDA allows both
probabilistic and non-deterministic branching, generalising the
classical notion of pushdown automata (without
epsilon-transitions).  
We first show a general 
 characterization of probabilistic bisimilarity in terms of two-player 
 games, which naturally 
reduces checking bisimilarity of probabilistic labelled
transition systems to checking bisimilarity of standard
(non-deterministic)
labelled transition systems. 
This reduction can be easily implemented in the framework of pPDA,
allowing to use known results for standard (non-probabilistic) PDA and
their subclasses. A direct use of the reduction incurs an exponential
increase of complexity, which does not matter in deriving decidability of
bisimilarity for pPDA due to the
non-elementary complexity of the problem.
In the cases of probabilistic one-counter automata
(pOCA),
of probabilistic visibly pushdown automata
(pvPDA), 
and of probabilistic basic process
algebras (i.e., single-state pPDA) we show that an implicit use of
the reduction can avoid the complexity increase; we thus 
get PSPACE, EXPTIME, and 2-EXPTIME upper bounds, respectively, like for
the respective non-probabilistic versions.
The bisimilarity problems for OCA and vPDA are known to have
matching lower bounds (thus being PSPACE-complete and
EXPTIME-complete, respectively); we show that these lower bounds also
hold for fully probabilistic versions that do not use non-determinism.
\end{abstract}

\maketitle


\section{Introduction}
Equivalence checking is the problem of determining whether two systems
are semantically identical.  This is an important question in
automated verification and, more generally, represents a line of
research that can be traced back to the inception of theoretical
computer science.  A great deal of work in this area has been devoted
to the complexity of \emph{bisimilarity} for various classes of
infinite-state systems related to grammars, such as one-counter
automata, basic process algebras, and pushdown automata,
see~\cite{Burkart00} for an overview.  We mention in particular the
landmark result showing the decidability of bisimilarity for pushdown
automata~\cite{Senizergues05}.

In this paper we are concerned with \emph{probabilistic} pushdown automata
(pPDA), that is, pushdown automata with both non-deterministic and
probabilistic branching.  In particular, our pPDA generalize classical
pushdown automata without $\varepsilon$-transitions.  We refer to
automata with only probabilistic branching as \emph{fully
  probabilistic}.

We consider the complexity of checking bisimilarity for probabilistic
pushdown automata and various subclasses thereof.  The subclasses we
consider are probabilistic versions of models that have been
extensively studied in previous
works~\cite{Burkart00,SrbaVisiblyPDA:2009}.  In particular, we
consider probabilistic one-counter automata (pOCA), which are
probabilistic pushdown automata with singleton stack alphabet;
probabilistic Basic Process Algebras (pBPA), which are single-state
probabilistic pushdown automata; probabilistic visibly pushdown
automata (pvPDA), which are automata in which the stack action,
whether to push or pop, for each transition is determined by the input
letter.  Probabilistic one-counter automata have been studied in the
classical theory of stochastic processes as \emph{quasi-birth-death
  processes}~\cite{EWY10}.  Probabilistic BPA seems to have been
introduced in~\cite{Brazdil08}.

Probabilistic finite-state automata are well understood, including
the complexity of bisimilarity~\cite{Baier96,DBLP:journals/jcss/BaierEM00,CBW12}.
Probabilistic pushdown automata, 
or the equivalent model of recursive Markov chains,
have been also studied (we can name \cite{DBLP:journals/lmcs/KuceraEM06} and 
\cite{DBLP:journals/jacm/EtessamiY09} among the first respective journal papers)
but there are relatively few works on equivalence of infinite-state probabilistic
systems.  Bisimilarity of probabilistic BPA was shown decidable
in~\cite{Brazdil08}, but without any complexity bound.
In~\cite{FuKatoen11} probabilistic simulation between
pPDA and finite state systems was studied.

We also note that in this paper we consider only systems without
$\varepsilon$-transitions; in this context bisimilarity is sometimes
also called \emph{strong bisimilarity}.  In the literature there are
various notions of behavioural equivalences for systems with (silent)
$\varepsilon$-transitions, in particular \emph{weak bisimilarity} and
\emph{branching bisimilarity}. Such notions have also been studied in
the context of probabilistic systems:
see~\cite{BaierHermanns97,PhilippouLS00} for weak 
bisimilarity, see \cite{AndovaW06} for branching 
bisimilarity, and see~\cite{Zhang2018} and the references therein
for more recent variants.

Our main concern here is to extend the known algorithms for strong
bisimilarity on the above mentioned subclasses of pushdown automata to
their probabilistic extensions. In the case of weak bisimilarity, the
known results are rather negative already for non-probabilistic
systems: we can recall the undecidability result for one-counter
automata (OCA)~\cite{DBLP:conf/icalp/Mayr03}, and other results
surveyed in (the updated online version of)~\cite{Srba:roadmap:04}.

\medskip

\emph{Our contribution.} 
We first suggest a characterization of bisimilarity in a probabilistic
labelled transition system (pLTS) $\calL$ in terms of a two-player
game; the
game can be viewed as the
standard bisimulation game played in a (non-probabilistic) LTS $\calL'$
that arises from $\calL$ by adding states corresponding to probability
distributions and subsets of their supports, while the probabilities of such
subsets are ``encoded'' as new actions.
This relatively simple reduction allows us 
to leverage the rich theory that has been developed for the standard
(non-probabilistic) bisimilarity to the 
probabilistic case. This is in particular straightforward in the case 
of devices with a control unit or a stack (like pushdown automata and
their subclasses); a probabilistic machine $M$ generating a pLTS $\calL$ can be
easily transformed to a non-probabilistic $M'$ generating the
mentioned LTS $\calL'$.

A negative feature of the above transformation of $M$ to $M'$
is an exponential increase of the machine size; thus a ``blind'' use of
the reduction easily extends decidability in a standard case to the
respective probabilistic case but the upper complexity bound gets
increased. A more careful analysis of known algorithms 
working for standard $M$ 
 is required to show that these algorithms can be modified 
to be also working for probabilistic $M$ where they, in fact, decide
bisimilarity for (exponentially bigger) $M'$ without constructing $M'$
explicitly. Roughly speaking, in the standard bisimulation game the players
transform a current pair of states, i.e., of configurations of a
standard $M$,
to a new current pair in a \emph{round} of a play.
If $M$ is probabilistic, then any standard play
for $M'$ starting in a pair of configurations of
$M$ visits a pair of configurations of
$M$ every three rounds.
The mentioned modifications of standard algorithms can be viewed as 
handling such three rounds as one macro-round.

We now list concrete decidability and complexity results obtained
in this paper:
\begin{itemize}
\item Using the above-mentioned ``blind'' reduction together with the result
	of~\cite{Senizergues05} 
		(for which~\cite{JancarIcalp14} gives an alternative
		proof), we show that bisimilarity for probabilistic
  pushdown automata is decidable. We do not care about the complexity
		increase, since the problem is known to be
		non-elementary~\cite{BGKM13}
(and even Ackermann-hard~\cite{JancarFossacs14} for the model
studied in~\cite{Senizergues05}).

\item For probabilistic visibly pushdown automata (pvPDA), the reduction
  immediately yields a 2-EXPTIME upper bound,
		using the EXPTIME-completeness result 
		in~\cite{SrbaVisiblyPDA:2009} for the standard case;
the upper bound in~\cite{SrbaVisiblyPDA:2009} was achieved by a
		reduction to a result in~\cite{Walukiewicz2001}.
Here we give a self-contained 
short proof 
that the bisimilarity problem for pvPDA is in EXPTIME;  
we thus also provide a new proof of the
EXPTIME upper bound in the standard case.

\item For the class of probabilistic
  BPA, i.e., pPDA with a single control state,
		decidability 	
		was shown in~\cite{Brazdil08},
		with no complexity upper bound.
Our generic reduction
  yields a 3-EXPTIME upper bound, using the 2-EXPTIME bound in the
		standard case (stated in~\cite{Burkart00},
		and explicitly proven in~\cite{Jancar12}). 
By a detailed look at the algorithm in~\cite{Jancar12}, we show the
		modifications that place the problem in 2-EXPTIME also
		in the probabilistic case.
We note that here we have a complexity gap, 
since only an EXPTIME lower bound for this problem
is known, already in the standard case~\cite{Kiefer12}.

\item
The bisimilarity problem for 
		one-counter automata (OCA) is known to be PSPACE-complete.
We show here that the upper bound also applies to the probabilistic
		case (pOCA), by modifying the algorithm described
		in~\cite{BGJ14}.

	\item		
Finally we show that the completeness results, namely 
	the	PSPACE-completeness for pOCA (or OCA) and
	the	EXPTIME-completeness for pvPDA (or vPDA), also hold for
		\emph{fully} probabilistic OCA and vPDA, respectively.
To this aim, we adapt the respective lower bound constructions.
		(We note that~\cite{Kiefer12} 
		shows that the EXPTIME hardness also holds for fully probabilistic BPA.)
\end{itemize}
This paper is based on a conference publication~\cite{DBLP:conf/fsttcs/ForejtJKW12}. 
It has arisen by a substantial rewriting, highlights
a crucial idea in a general form, gives
complete proofs in a more unified way, and improves a 3-EXPTIME upper
bound for pBPA
from~\cite{DBLP:conf/fsttcs/ForejtJKW12} to
the 2-EXPTIME upper bound, which supports the message discussed in
Section~\ref{sec:conclusion}.

Section~\ref{sec:prelim} contains basic definitions,
and Section~\ref{sec-prob-to-nondet} then shows a game characterization 
of probabilistic bisimilarity, and a reduction to standard
bisimilarity.
Section~\ref{sec-upper-bounds}
provides the announced complexity upper bounds, while 
Section~\ref{sec-lower-bounds} shows the lower bounds.
Section~\ref{sec:conclusion} contains some concluding remarks.

\section{Preliminaries}\label{sec:prelim}

By $\N$ and $\Q$ we denote the sets of nonnegative integers and of
rationals, respectively.
Given a finite or countable set $A$, a \emph{probability distribution on} $A$ is
a function $d\colon A \rightarrow [0,1] \cap \Q$ 
such that $\sum_{a\in A}
d(a) =1$;
the \emph{support} of $d$ 
is the set $\support{d}=\{\, a\in A \mid d(a)>0\,\}$.  The set of all probability distributions on $A$ is
denoted by $\dist{A}$.
A probability distribution $d\in\dist{A}$ is {\em Dirac} if 
$d(a)=1$ for some $a\in A$ (and $d(y)=0$ for all $y\neq a$).
For any set $B \subseteq A$ we define $d(B) = \sum_{a \in B} d(a)$.
When there is no confusion, we may write $\sum_{a \in \support{d}}
d(a) a$ to indicate~$d$. E.g., for $d$ with $d(a) = \frac13$ and $d(b)
= \frac23$ we may write $\frac13 a + \frac23 b$; if $d(a)=1$, then we may
write $d$ as $1a$ or simply as $a$.

\subsection{Labelled Transition Systems (pLTSs, fpLTSs, LTSs)}

A \emph{probabilistic labelled transition system (pLTS)} is a tuple
$\calL = (S,\Sigma,\mathord{\tran{}})$, where $S$ is a finite or countable set
of \emph{states}, $\Sigma$ is a finite \emph{action alphabet}, and $\mathord{\tran{}} \subseteq S\times \Sigma \times
     {\dist{S}}$ is a \emph{transition relation}.  Throughout the
     paper we assume that pLTS are \emph{image-finite}, that is, we
     assume that for each $s\in S$ and $a\in \Sigma$ there are only
     finitely many $d\in\dist{S}$ such that
     $(s,a,d)\in\mathord{\tran{}}$.

We usually write $s\tran{a} d$ instead of
$(s,a,d)\in\mathord{\tran{}}$.  By $s \tran{} s'$ we denote that there
is a transition $s \tran{a} d$ with $s' \in \support{d}$.  (Our
definition allows that the set $\{\,s'\mid s \tran{} s'\,\}$ might be
infinite for a state $s$, though such sets are finite in the later
applications.)
State $s'$ is \emph{reachable from} $s$ if $s\tran{}^*s'$, where
$\mathord{\tran{}^*}$ is the reflexive and transitive closure of~$\mathord{\tran{}}$.

In general a pLTS combines non-deterministic and probabilistic 
 branching.  
A pLTS $\calL = (S,\Sigma,\mathord{\tran{}})$ is \emph{fully
  probabilistic}, an \emph{fpLTS}, if for 
  each pair $s\in S$, $a \in \Sigma$
we have $s \tran{a} d$ for at most one distribution~$d$.
A pLTS $\calL = (S,\Sigma,\mathord{\tran{}})$ is a \emph{standard LTS}, or just
an \emph{LTS} for short, if in each $s \tran{a} d$ the distribution
$d$ is Dirac; in
this case $s \tran{a} d$ is instead written as
 $s\trans{a}s'$ where $d(s')=1$.

Let $\calL = (S,\Sigma,\mathord{\tran{}})$ be a pLTS and $R$ be an
equivalence relation on $S$.  We say that two distributions $d,d' \in
\mathcal{D}(S)$ are \emph{$R$-equivalent} if 
$d(E) = d'(E)$ for each $R$-equivalence
class $E$.
We
furthermore say that $R$ is a \emph{bisimulation relation} if $s
\mathrel{R} t$ (and thus also  $t\mathrel{R} s$) implies that 
for each transition $s\tran{a}d$ there is a transition $t\tran{a} d'$ 
(with the same action $a$)
such
that $d$ and $d'$ are $R$-equivalent.
The union of all bisimulation
relations on $\mathcal{S}$ is itself a bisimulation relation; this
relation is called \emph{bisimulation equivalence} or
\emph{bisimilarity}~\cite{SL94}; we denote it by $\mathord{\sim}$.

We also use the following inductive characterization of bisimilarity,
assuming a pLTS $\calL = (S,\Sigma,\mathord{\tran{}})$.
We define a decreasing sequence of equivalence relations $\mathord{\sim_0}\supseteq
\mathord{\sim_1}\supseteq \mathord{\sim_2}\supseteq\cdots$ on $S$ by putting $s
\sim_0 t$ for all $s,t$, and $s\sim_{n+1}t$ if
for each transition $s\tran{a}d$ there is a transition 
$t\tran{a} d'$ such that $d,d'$ are $\mathord{\sim_n}$-equivalent
(i.e., $d(E) = d'(E)$ for 
every $\mathord{\sim_n}$-equivalence class $E$).  It is easy to verify 
that the
sequence $\mathord{\sim_n}$ converges to $\mathord{\sim}$, i.e.,
$\bigcap_{n\in\N}\mathord{\sim_n}=\mathord{\sim}$; the fact
$\bigcap_{n\in\N}\mathord{\sim_n}\supseteq \mathord{\sim}$ is trivial
(since ${\sim_n}\supseteq{\sim}$ for each $n\in\N$),
and ${\bigcap_{n\in\N}\mathord{\sim_n}}\subseteq{\sim}$ holds since
$\bigcap_{n\in\N}\mathord{\sim_n}$ is a bisimulation due to our image-finiteness
assumption on pLTSs (as can be easily checked).

\subsection{Pushdown Automata (pPDA, fpPDA, PDA) and Their Subclasses}

A \emph{probabilistic pushdown automaton} (pPDA) is a tuple $\Delta =
(Q,\Gamma,\Sigma,\mathord{\btran{}})$ where $Q$ is a finite set of
(control) \emph{states}, $\Gamma$ is a finite \emph{stack alphabet}, $\Sigma$ is a finite
\emph{action alphabet}, and $\mathord{\btran{}} \subseteq Q\times \Gamma
\times \Sigma \times {\dist{Q\times \Gamma^{\le 2}}}$
is 
a set of \emph{(transition) rules}; by  $\Gamma^{\le 2}$ we denote the
set 
$\{\varepsilon\} \cup \Gamma \cup \Gamma\Gamma$ where $\varepsilon$
denotes the empty string (and  $\Gamma\Gamma=\{\,XY\mid X\in\Gamma,
Y\in\Gamma\,\}$).
A~\emph{configuration} of $\Delta$ is a pair $(q,\beta)\in Q\times
\Gamma^*$, viewed also as a string 
$q\beta\in Q\Gamma^*$ (where $\Gamma^*$ is the set of finite words over
alphabet $\Gamma$).
We write $qX \btran{a} d$ to denote that $(q,X,a,d)$ is a
rule (i.e., an element of $\mathord{\btran{}}$).
When speaking of the \emph{size} of~$\Delta$, we assume that the
probabilities in the rules are given as
quotients of integers written in binary.

A pPDA $\Delta = (Q,\Gamma,\Sigma,\mathord{\btran{}})$ generates the
pLTS~\mbox{$\calL_{\Delta}=(Q\Gamma^*, \Sigma, \mathord{\tran{}})$}
defined as follows.  For each $\beta \in \Gamma^*$, any rule $qX \btran{a} d$ of
$\Delta$ induces a transition $qX\beta \tran{a} d'$ in
$\calL_{\Delta}$ where $d' \in \dist{Q\Gamma^*}$
satisfies $d'(p \alpha \beta) = d(p \alpha)$ for 
each $p \alpha\in\support{d}$ (hence $p\alpha\in Q\Gamma^{\leq 2}$).
We note that all configurations $q\varepsilon$ (with the empty stack) are ``dead'' (or
terminating)
states of $\calL_{\Delta}$.

A tuple $qX\in Q\Gamma$ is called a \emph{head}. 
A pPDA $\Delta = (Q,\Gamma,\Sigma,\mathord{\btran{}})$ is
\emph{fully probabilistic}, an \emph{fpPDA},
if for each head $qX$ and each action $a \in
\Sigma$ there is at most one distribution $d$ such that
 $qX \btran{a} d$.
The pLTS $\calL_{\Delta}$ generated by
an fpPDA $\Delta$ is thus an fpLTS.

A \emph{standard PDA}, or a \emph{PDA} for short, is a pPDA where all
distributions in the rules are Dirac; in this case $\calL_\Delta$ is a
(standard) LTS.

A \emph{probabilistic basic process algebra (pBPA)} is a pPDA with only one control state.
In this case it is natural to omit the control state in
configurations.

A \emph{probabilistic visibly pushdown automaton (pvPDA)} is a pPDA 
$(Q,\Gamma,\Sigma,\mathord{\btran{}})$
with a partition of the actions
 $\Sigma = \Sigmar \cup \Sigmai \cup \Sigmac$
 such that for all $pX \btran{a} d$ we have:
  if $a \in \Sigmar$ then $\support{d} \subseteq Q \times \{\varepsilon\}$;
  if $a \in \Sigmai$ then $\support{d} \subseteq Q \times \Gamma$;
  if $a \in \Sigmac$ then $\support{d} \subseteq Q \times \Gamma\Gamma$.

A \emph{probabilistic one-counter automaton (pOCA)} is a pPDA 
$\Delta=(Q,\Gamma,\Sigma,\mathord{\btran{}})$ where $\Gamma=\{I,Z\}$,
for each $p\alpha\in Q\Gamma^{\leq 2}$ such that 
$d(p\alpha)>0$  for some $qI\btran{a}d$ we have $\alpha\in
\{\varepsilon, I, II\}$, and for each 
$p\alpha\in Q\Gamma^{\leq 2}$ such that 
$d(p\alpha)>0$  for some $qZ\btran{a}d$ we have $\alpha\in
\{Z,IZ\}$. 
In this case $\calL_\Delta$ is restricted to the set of states $q\alpha Z$
where $\alpha\in\{I\}^*$; this set is closed under reachability due to
the form of $\mathord{\btran{}}$. In  $q\alpha Z$ the length of $\alpha$ is viewed
as a non-negative counter value, while $Z$ always and only occurs at
the bottom of the stack. 

The fully probabilistic and standard versions of the mentioned
subclasses of pPDA, i.e., \emph{fpBPA},  \emph{fpvPDA}, \emph{fpOCA},
\emph{BPA}, \emph{vPDA}, \emph{OCA},
are defined analogously as in the general case.

The \emph{bisimilarity problem} for pPDA asks whether two configurations
$q_1 \alpha_1$ and $q_2 \alpha_2$ of a given pPDA $\Delta$ are bisimilar when
regarded as states of the generated pLTS $\calL_{\Delta}$.
We will also consider restrictions of the problem to subclasses of
pPDA.

\begin{figure}
\centering
\begin{tikzpicture}[x=2.5cm,y=1.7cm] 
\foreach \x/\l in {0/pZ,1/pXZ,2/pXXZ,3/pXXXZ,4/pXXXXZ}
   \node (D\x)   at (\x,-1)   [ran] {$\l$};
\foreach \x/\xx in {1/0,2/1,3/2,4/3}
   \draw [tran] (D\x) to  node[below]   {$0.5$} (D\xx);
\foreach \x/\l in {2/qXXZ,3/qXXXZ,4/qXXXXZ}
   \node (H\x)   at (\x,0)   [ran] {$\l$};
\foreach \x/\xx in {1/2,2/3,3/4}
   \draw [tran,rounded corners] (D\x) -- node[very near start,above=0.7mm] {$0.5$} (H\xx);
\foreach \x/\xx in {2/3,3/4}
   \draw [tran,rounded corners] (H\x) -- node[near start,above=0.5mm,inner sep=0] {$1$} (D\xx);
\draw [thick, dotted] (H4) -- +(.8,0);
\draw [thick, dotted] (D4) -- +(.8,0);

\foreach \x/\l in {0/r,1/rX,2/rYX,3/rXXX}
   \node (A\x)   at (\x,-2)   [ran] {$\l$};
   \node (Ba)   at (2,-2.5)   [ran] {$rYX'$};
   \node (Bb)   at (3,-3)   [ran] {$rXXX'$};

   \node (Aa)   at (4,-1.9)   [ran] {$rXX$};
   \node (Ab)   at (4,-2.3)   [ran] {$rYXXX$};
   \node (Ac)   at (4,-2.7)   [ran] {$rYX'XX$};

   \draw [tran,rounded corners] (A1) -- node[midway,below] {$0.5$} (A0);
   \draw [tran,rounded corners] (A1) -- node[midway,above] {$0.3$} (A2);
   \draw [tran,rounded corners] (A1) -- node[midway,below] {$0.2$} (Ba);
   \draw [tran,rounded corners] (A2) -- node[midway,above] {$1$} (A3);
   \draw [tran,rounded corners] (A3) -- node[pos=0.7,above] {$0.5$} (Aa.west);
   \draw [tran,rounded corners] (A3) -- node[near end,above] {$0.3$} (Ab.west);
   \draw [tran,rounded corners] (A3) -- node[pos=0.5,below=1mm] {$0.2$} (Ac.west);
   \draw [tran,rounded corners] (Ba) -- node[midway,above] {$1$} (Bb);
\draw [thick, dotted] (Aa) -- +(.8,0);
\draw [thick, dotted] (Ab) -- +(.8,0);
\draw [thick, dotted] (Ac) -- +(.8,0);

\draw [thick, dotted] (Bb) -- +(.8,-0.25);

\end{tikzpicture}
\caption{A fragment of $\calL_{\Delta}$ from Example~\ref{ex-preli-example}.}
\label{fig-preli-example}
\end{figure}

\begin{exa} \label{ex-preli-example}
  Consider the fpPDA~$\Delta = (\{p,q,r\},\{X,X',Y,Z\},\{a\},\mathord{\btran{}})$ with the following rules
   (omitting the unique action~$a$):
  \begin{align*}
     pX &\btran{} 0.5 qXX + 0.5 p               & qX &\btran{} pXX \\
     rX &\btran{} 0.3 rYX  + 0.2  rYX' + 0.5 r  & rY &\btran{} rXX \\
     rX' &\btran{} 0.4 rYX + 0.1 rYX' + 0.5 r
  \end{align*}
  The restriction of~$\Delta$ to the control states $p,q$ and to the
	stack symbols $X,Z$ yields a pOCA, when $X$ plays the role of
	$I$ from the definition.
  The restriction of~$\Delta$ to the control state~$r$ and the stack symbols $X,X',Y$ yields a pBPA.
	A fragment of the pLTS $\calL_{\Delta}$ is shown in Figure~\ref{fig-preli-example}.
  The configurations $pXZ$ and~$rX$ are bisimilar,
   as there is a bisimulation relation with equivalence classes
    $\{pX^kZ\} \cup \{\,r w \mid w \in \{X,X'\}^k\,\}$ for all $k \ge 0$ and
    $\{qX^{k+1}Z\} \cup \{\,r Y w \mid w \in \{X,X'\}^k\,\}$ for all $k \ge 1$.
\end{exa}

\section{Game Characterization of Probabilistic Bisimilarity}
\label{sec-prob-to-nondet}

Bisimilarity in standard LTSs has a natural characterization in terms
of two-player games.  This game-theoretic characterization is a source
of intuition and allows for elegant presentations of certain
arguments.  The players in the bisimulation game can be called
\emph{Attacker} and \emph{Defender}; a play of such a game, in an LTS
$\calL=(S,\Sigma,\mathord{\tran{}})$, yields a sequence of pairs of states,
starting from a given initial pair $(s,s')$.  The objective of the
game for \emph{Attacker} is a reachability objective.  Specifically,
pairs of states $(s_1,s_2)$ for which $s_1$ and $s_2$ enable different
sets of actions are declared as Attacker goals; Attacker aims to reach
such a pair, Defender aims to avoid this.

A play arises as follows. 
Given a current pair $(s_1,s_2)$, the players  perform
a prescribed protocol that results either in a demonstration 
that  $(s_1,s_2)$ is an Attacker goal, in which case the play finishes,
or in creating a new current pair 
 $(s'_1,s'_2)$. 
In the standard bisimilarity game the protocol is the
 following:
 Attacker chooses a transition $s_i\tran{a}s'_i$ for some
 $i\in\{1,2\}$, $a\in\Sigma$, $s'_i\in S$,
 and Defender responds by choosing
 $s_{3-i}\tran{a}s'_{3-i}$ for some $s'_{3-i}\in S$; this yields a pair $(s'_1,s'_2)$.
 If Defender has no possible response (action $a$ is not enabled in
 $s_{3-i}$), then the play finishes since 
 it has been demonstrated that $(s_1,s_2)$ is an
 Attacker goal. If no action is enabled in any of $s_1,s_2$,
 then we can formally put  $(s'_1,s'_2)=(s_1,s_2)$, or simply stop the
 play as a win of Defender.
 
 It is easy to observe that $s\sim_n s'$ 
iff Attacker cannot force
 his win in $n$ rounds of the play (i.e., within $n$ runnings of the
 protocol) when $(s,s')$ is the initial current pair. 

In the case of \emph{probabilistic bisimilarity}, i.e.,
of bisimilarity in probabilistic LTSs,
a similar game characterization is not immediately obvious.
We suggest the characterization arising by 
the following modification of the above game, now in a pLTS
$\calL=(S,\Sigma,\mathord{\tran{}})$; we define
the protocol, to be performed in a round of the game 
from a current pair
$(s_1,s_2)$, as follows:

\begin{enumerate}
\item
Attacker chooses $s_i\tran{a}d_i$, for some
	$i\in\{1,2\}$ and $(s_i,a,d_i)\in\mathord{\tran{}}$; if this is not possible, the play stops by
		declaring Defender the winner.
		Defender chooses $s_{3-i}\tran{a}d_{3-i}$
		for some  $(s_{3-i},a,d_{3-i})\in\mathord{\tran{}}$; if not
		possible, then the play finishes
		by declaring Attacker the winner 
		(it has been demonstrated that
		an Attacker goal has been reached).
\item
Attacker chooses a nonempty subset $T_j\subseteq \support{d_j}$
for some
		$j\in\{1,2\}$; Defender chooses some $T_{3-j}\subseteq
		\support{d_{3-j}}$ such
that $d_{3-j}(T_{3-j})\geq d_{j}(T_{j})$.
\item
Attacker chooses $s'_k\in T_k$ for some
		$k\in\{1,2\}$; 	Defender chooses $s'_{3-k}\in T_{3-k}$.
	\item
	The pair $(s'_1,s'_2)$ becomes a new current pair.
\end{enumerate}

\begin{exa}\label{ex-gen-nondet-constr1}
Figure~\ref{fig-ex-gen-nondet-constr} shows a finite pLTS $\calL = (S,\Sigma,\mathord{\tran{}})$.
\begin{figure}
\begin{center}
\begin{tikzpicture}[xscale=2,thick]
\tikzstyle{lran} = [ran,inner sep=2pt,minimum height=5mm, minimum width=8mm];

\node[lran] (s1) at (0,0)  {$s$};
\node[lran] (t1) at (2,1)  {$t_1$};
\node[lran] (t2) at (3,1)  {$t_2$};
\coordinate (t1t) at (2.5,2.5);
\coordinate (t1b) at (2.5,-0.5);
\node[lran] (u1) at (2.5,-1.5) {$u$};
\coordinate (s1t1) at (1.2,0);

\draw[tran] (t1) edge [bend left] node[above] {$a$} (t2);
\draw[tran] (t2) edge [bend left] node[below] {$a$} (t1);
\draw[tran, rounded corners=4mm] (s1) to [bend left] node[above] {$a$} (t1t) to node[left] {$\frac13$} (t1);
\draw[tran, rounded corners=4mm] (s1) to [bend left]                   (t1t) to node[right] {$\frac23$} (t2);
\draw[tran, rounded corners=2mm] (u1) to node[pos=0.5,right,inner sep=0.5mm] {$b$} (t1b) to node[left] {$\frac13$} (t1);
\draw[tran, rounded corners=2mm] (u1) to                                          (t1b) to node[right] {$\frac23$} (t2);
\draw[tran, rounded corners=2mm] (s1) to                   +(1,0) to node[above] {$\frac12$} (t1);
\draw[tran, rounded corners=2mm] (s1) to node[above] {$b$} +(1,0) to node[below] {$\frac12$} (u1);
\draw[tran] (u1) to [bend left=15] node[left] {$a$} (t1);
\draw[tran] (t2) edge [loop,out=20,in=-20,looseness=5] node[right] {$a$} (t2);
\end{tikzpicture}
\end{center}
\caption{A finite pLTS $\calL$ with bisimulation equivalence classes 
$\{s\}, \{t_1, t_2\}, \{u\}$.}\label{fig-ex-gen-nondet-constr}
\end{figure}
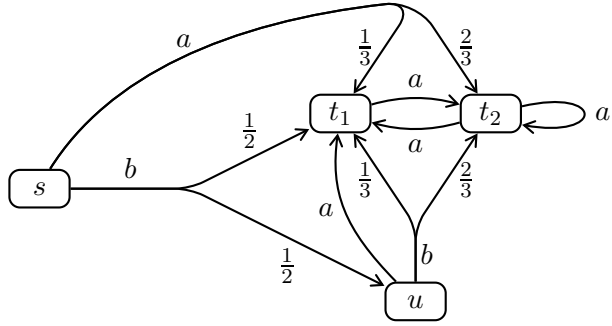
Here is a play that proves $s \not\sim_2 u$.
The play starts in $(s, u)$.
Attacker chooses $s \tran{b} d_1$ where $d_1 = \frac12 u + \frac12 t_1$.
This forces Defender to choose $u \tran{b} d_2$ where $d_2 = \frac13 t_1 + \frac23 t_2$.
Now Attacker may choose $\{t_2\} \subseteq \support{d_2}$, where $d_2(\{t_2\}) = \frac23$.
Defender now has to choose $T_1 \subseteq \support{d_1} = \{u,t_1\}$ so that $d_1(T_1) \ge \frac23$, i.e., she has to choose $T_1 = \{u, t_1\}$.
Attacker can now take $u \in T_1$, and Defender can only take $t_2$.
Thus the new pair is $(u,t_2)$.
In the next round Attacker chooses $u \tran{b} d_2$, and Defender cannot respond.
That is, $(u,t_2)$ is an Attacker goal, and Attacker has won.
Thus $s \not\sim_2 u$.
\end{exa}

It might not be obvious in general that $s\sim_n s'$ (in the given pLTS) 
iff Attacker cannot force a win in $n$ rounds when starting from
$(s,s')$.
We prove this formally below; moreover, we intentionally implement the protocol
as a standard bisimulation (sub)game (consisting of three rounds),
since this makes it easier
to lift known results for standard bisimilarity to the case of
probabilistic bisimilarity.
Hence we transform a given probabilistic LTS $\calL$ to a
standard LTS $\calL'$ that extends the state set of $\calL$ 
by viewing distributions and also 
nonempty subsets of their supports as explicit states; we will
guarantee that a pair of
original states will be bisimilar in (the probabilistic) $\calL$ iff
it is bisimilar in (the standard) $\calL'$.

Now we describe the transformation of $\calL$ to $\calL'$ rigorously.
We assume a pLTS $\calL = (S,\Sigma,\mathord{\tran{}})$ and use the following
technical notions:
\begin{itemize}
\item
a \emph{distribution}
$d\in\dist{S}$ is \emph{relevant} if
$s\tran{a}d$ for some $s\in S$, $a\in\Sigma$;
\item
 a \emph{nonempty set of states}
$T\subseteq S$ is
\emph{relevant} if $T\subseteq\support{d}$ for some relevant $d$;
\item
	a \emph{number} $\rho$ is \emph{relevant} if 
	$\rho=d(T)$ for some relevant $d$ and  $\emptyset\neq T\subseteq \support{d}$
	
	(hence $0<\rho\leq 1$).
\end{itemize}

The LTS $\calL' = (S',\Sigma',\mathord{\ctran{}})$ is defined as follows:
\begin{itemize}
\item
$S'=S\cup\{\,d\in\dist{S}\mid d$ is relevant$\,\}\cup\{\,T\subseteq S\mid
T$ is relevant$\,\}$.
Note that $S'$ is at most countable if each relevant distribution~$d$ has finite support.
\item
$\Sigma'=\Sigma\cup\{\,\rho\in[0,1]\mid \rho$ is relevant$\,\}\cup\{\#\}$ where
the three parts of the union are pairwise disjoint.
\item
The relation $\mathord{\ctran{}}$ is the smallest relation satisfying the
		following conditions:
\begin{itemize}
\item
if $s\tran{a}d$ (in $\calL$), then 
$s\ctran{a}d$ (in $\calL'$);
\item
if $T\subseteq\support{d}$ and $d(T)\geq \rho$ for some relevant
		$d,\rho$ (hence $T\neq\emptyset$),
then $d\ctran{\rho}T$;
\item
if $s\in T$ (for some relevant $T$), then 
$T\ctran{\#}s$.
\end{itemize}
\end{itemize}

\begin{exa}\label{ex-gen-nondet-constr2}
Figure~\ref{fig-ex-gen-nondet-constr2} shows the LTS $\calL' =
(S',\Sigma',\mathord{\ctran{}})$ obtained from~$\calL$ in
Example~\ref{ex-gen-nondet-constr1} (depicted in Figure~\ref{fig-ex-gen-nondet-constr}) according to the construction above.
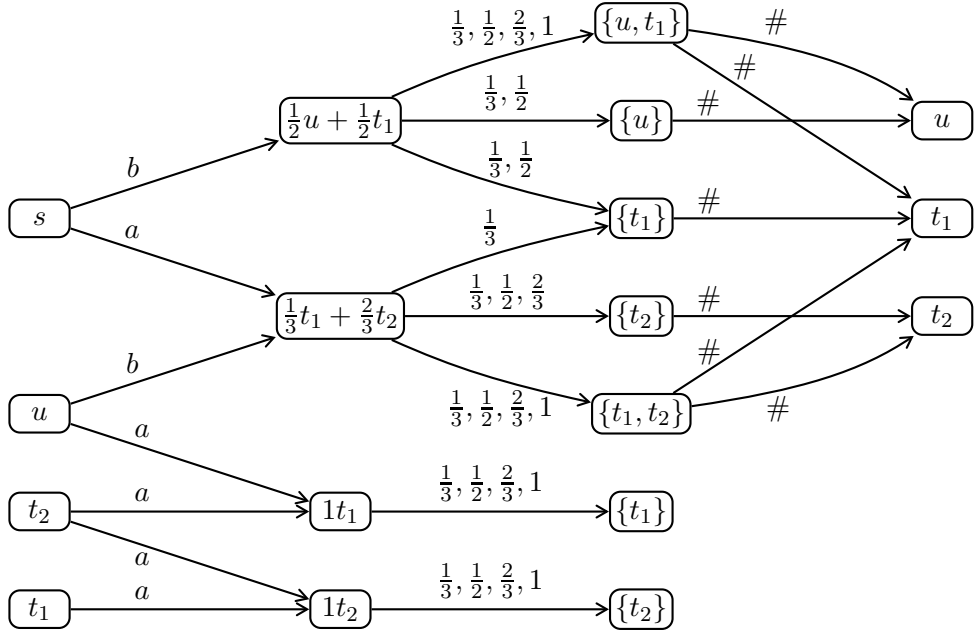
\begin{figure}
\begin{center}
\begin{tikzpicture}[xscale=4,yscale=1.3,thick]
\tikzstyle{lran} = [ran,inner sep=2pt,minimum height=5mm, minimum width=8mm];

\node[lran] (u)    at (1,-1) {$u$};
\node[lran] (t1)   at (1,-2) {$t_1$};
\node[lran] (t2)   at (1,-3) {$t_2$};

\node[lran] (Sut1) at (0,0)  {$\{u, t_1\}$};
\node[lran] (Su)   at (0,-1) {$\{u\}$};
\node[lran] (St1)  at (0,-2) {$\{t_1\}$};
\node[lran] (St2)  at (0,-3) {$\{t_2\}$};
\node[lran] (St1t2)at (0,-4) {$\{t_1,t_2\}$};
\node[lran] (St1') at (0,-5) {$\{t_1\}$};
\node[lran] (St2') at (0,-6) {$\{t_2\}$};

\node[lran] (dt1u) at (-1,-1) {$\frac12 u + \frac12 t_1$};
\node[lran] (dt1t2)at (-1,-3) {$\frac13 t_1 + \frac23 t_2$};
\node[lran] (dt1)  at (-1,-5) {$1 t_1$};
\node[lran] (dt2)  at (-1,-6) {$1 t_2$};

\node[lran] (s)    at (-2,-2) {$s$};
\node[lran] (u')   at (-2,-4) {$u$};
\node[lran] (t2')  at (-2,-5) {$t_2$};
\node[lran] (t1')  at (-2,-6) {$t_1$};

\draw[tran] (Sut1) to[bend left=20] node[pos=0.3 ,above          ] {$\#$}                           (u); 
\draw[tran] (Sut1) to               node[pos=0.3 ,above          ] {$\#$}                           (t1); 
\draw[tran] (Su)   to               node[pos=0.15,above,inner sep=0.5mm]  {$\#$}                    (u); 
\draw[tran] (St1)  to               node[pos=0.15,above,inner sep=0.5mm]  {$\#$}                    (t1); 
\draw[tran] (St2)  to               node[pos=0.15,above,inner sep=0.5mm]  {$\#$}                    (t2); 
\draw[tran] (St1t2)to               node[pos=0.15,above,inner sep=0.5mm]  {$\#$}                    (t1); 
\draw[tran] (St1t2)to[bend right=20]node[pos=0.30,below,inner sep=0.5mm]  {$\#$}                    (t2); 

\draw[tran] (dt1u) to[bend left=10] node[pos=0.6,above] {$\frac13, \frac12, \frac23, 1$} (Sut1);
\draw[tran] (dt1u) to               node[        above          ] {$\frac13, \frac12            $} (Su);
\draw[tran] (dt1u) to[bend right=10]node[pos=0.6,above]{$\frac13, \frac12            $} (St1);

\draw[tran] (dt1t2)to[bend left=10] node[pos=0.5,above] {$\frac13                     $} (St1);
\draw[tran] (dt1t2)to               node[        above,yshift=-0.5mm ] {$\frac13, \frac12, \frac23   $} (St2);
\draw[tran] (dt1t2)to[bend right=10]node[pos=0.6,below]{$\frac13, \frac12, \frac23, 1$} (St1t2);

\draw[tran] (dt1)  to               node[        above          ] {$\frac13, \frac12, \frac23,1 $} (St1');
\draw[tran] (dt2)  to               node[        above          ] {$\frac13, \frac12, \frac23,1 $} (St2');

\draw[tran] (s)    to               node[pos=0.3,above          ] {$b$                           } (dt1u);
\draw[tran] (s)    to               node[pos=0.3,above          ] {$a$                           } (dt1t2);
\draw[tran] (u')   to               node[pos=0.3,above          ] {$b$                           } (dt1t2);
\draw[tran] (u')   to               node[pos=0.3,above          ] {$a$                           } (dt1);
\draw[tran] (t2')  to               node[pos=0.3,above          ] {$a$                           } (dt1);
\draw[tran] (t2')  to               node[pos=0.3,below,inner sep=0.9mm] {$a$                           } (dt2);
\draw[tran] (t1')  to               node[pos=0.3,above          ] {$a$                           } (dt2);
\end{tikzpicture}
\end{center}
\caption{The standard LTS $\calL'$ obtained from the probabilistic LTS
	$\calL$ in Figure~\ref{fig-ex-gen-nondet-constr}.
For better readability, some states appear multiple times but their outgoing transitions are drawn only once.}
\label{fig-ex-gen-nondet-constr2}
\end{figure}
The play from Example~\ref{ex-gen-nondet-constr1} has a corresponding play in the bisimulation game of~$\calL'$.
We have $s \sim_1 u$ in~$\calL$ and $s \sim_3 u$ in~$\calL'$, but $s \not\sim_2 u$ in~$\calL$ and $s \not\sim_4 u$ (hence $s \not\sim_6 u$) in~$\calL'$.
\end{exa}

The next lemma captures a crucial fact on the relation of $\calL$ and
$\calL'$.

\begin{lem}\label{lem:gen-nondet-constr-correct}
Given a pLTS $\calL$ and the LTS $\calL'$ as above,
for any states $s,s'$ of $\calL$ and any $n\in\N$ we have
		\begin{center}
	$s\sim_n s'$ in $\calL$ iff $s \sim_{3n} s'$ in $\calL'$.
	\end{center}
This also yields that
$s\sim s'$ in $\calL$ iff $s \sim s'$ in $\calL'$.
	\end{lem}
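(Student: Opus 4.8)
The plan is to prove the finitary statement, $s\sim_n s'$ in $\calL$ iff $s\sim_{3n}s'$ in $\calL'$, by induction on $n$, and then read off the assertion about $\sim$. The key structural observation is that $\calL'$ is \emph{layered}: from an original state $s\in S$ only $\Sigma$-labelled transitions are enabled and they lead to relevant distributions; from a relevant distribution $d$ only transitions labelled by relevant numbers $\rho$ are enabled and they lead to relevant sets; and from a relevant set $T$ only $\#$-labelled transitions are enabled and they lead back to original states. Since the three parts $\Sigma$, the relevant numbers, and $\{\#\}$ of $\Sigma'$ are pairwise disjoint, every play of the standard bisimulation game in $\calL'$ that starts from a pair of original states visits, after $3k$ rounds, a pair of original states, after $3k+1$ rounds a pair of distributions, and after $3k+2$ rounds a pair of sets, regardless of which side Attacker chooses in each round. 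Unfolding the approximant definition three times along these layers, I would record three claims:
\begin{itemize}
\item[(S)] for relevant sets $T,T'$: $T\sim_{3n+1}T'$ in $\calL'$ iff $T$ and $T'$ meet the same $\sim_{3n}$-classes of $\calL'$ (i.e.\ every $s_1\in T$ is $\sim_{3n}$-related to some $s_2\in T'$, and conversely);
\item[(D)] for relevant distributions $d,d'$: $d\sim_{3n+2}d'$ in $\calL'$ iff for every nonempty $T\subseteq\support{d}$ there is $T'\subseteq\support{d'}$ with $d'(T')\ge d(T)$ and $T\sim_{3n+1}T'$, and symmetrically;
\item[(T)] for $s,s'\in S$: $s\sim_{3n+3}s'$ in $\calL'$ iff for every transition $s\tran{a}d$ of $\calL$ there is a transition $s'\tran{a}d'$ with $d\sim_{3n+2}d'$ in $\calL'$, and symmetrically.
\end{itemize}
Claims (S) and (T) are immediate from the definition of $\ctran{}$; in (D) the only point is that an attack $d\ctran{\rho}T$ is most demanding for $\rho=d(T)$, which is a relevant number whenever $\emptyset\ne T\subseteq\support{d}$, and that $d\ctran{\rho}T$ holds for every relevant $\rho\le d(T)$, so a response to $\rho=d(T)$ also answers every smaller~$\rho$.

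The crux is a purely combinatorial lemma on distributions, which I would isolate and prove first: for any equivalence $R$ on $S$ and any $d,d'\in\dist{S}$, the distributions $d,d'$ are $R$-equivalent (that is, $d(E)=d'(E)$ for every $R$-class $E$) if and only if for every nonempty $T\subseteq\support{d}$ there is $T'\subseteq\support{d'}$ with $d'(T')\ge d(T)$ such that $T$ and $T'$ meet the same $R$-classes (and symmetrically, although one of the two directions already implies the other). For ``$\Leftarrow$'', given $T$ let $\mathcal E$ be the set of $R$-classes met by $T$ and put $T'=\support{d'}\cap\bigcup\mathcal E$: each class in $\mathcal E$ has positive $d$-mass, hence positive $d'$-mass, so $T'$ meets exactly the classes in $\mathcal E$, and $d'(T')=\sum_{E\in\mathcal E}d'(E)=\sum_{E\in\mathcal E}d(E)\ge d(T)$. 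For ``$\Rightarrow$'', if $d(E)>d'(E)$ for some class $E$, then $T:=\support{d}\cap E$ is nonempty, every admissible $T'$ is contained in $E$, and $d'(T')\le d'(E)<d(E)=d(T)$, contradicting $d'(T')\ge d(T)$.

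With these pieces the induction is routine. The base case $n=0$ is trivial, since $\sim_0$ identifies all states in $\calL$ and in $\calL'$. For the inductive step, assuming the equivalence for $n$, apply (T), then (D), then (S) to rewrite ``$s\sim_{3n+3}s'$ in $\calL'$'' as a statement quantifying over transitions $s\tran{a}d$, nonempty subsets of the supports, and their elements, whose innermost relation is $\sim_{3n}$ on original states; by the induction hypothesis this is the same as $\sim_n$ in $\calL$; and the combinatorial lemma with $R=\sim_n$ then collapses the subset-and-element quantifiers into the condition ``$d$ and $d'$ are $\sim_n$-equivalent'', which is exactly the inductive definition of $s\sim_{n+1}s'$ in $\calL$. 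Finally, since the approximants form a decreasing chain in $\calL'$ we have $\bigcap_m\sim_m=\bigcap_n\sim_{3n}$, so $s\sim s'$ in $\calL$ iff $s\sim_n s'$ in $\calL$ for all $n$ (using $\bigcap_n\sim_n=\sim$ in $\calL$) iff $s\sim_{3n}s'$ in $\calL'$ for all $n$ iff $s\sim_m s'$ in $\calL'$ for all $m$ iff $s\sim s'$ in $\calL'$; the last equivalence uses $\bigcap_m\sim_m=\sim$ in $\calL'$, which holds since $\calL'$ is image-finite when relevant distributions have finite support (in general, the ``$\Rightarrow$'' direction can instead be obtained directly by checking that the relation on $S'$ which links $s$ to $s'$ when $s\sim s'$ in $\calL$, links $d$ to $d'$ when $d,d'$ are $\sim$-equivalent, and links $T$ to $T'$ when they meet the same $\sim$-classes, is a bisimulation of $\calL'$).

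The step I expect to cost the most thought is formulating the combinatorial lemma on distributions in precisely the shape that meshes with the $\rho$-labelled ``probability-encoding'' transitions of $\calL'$; its proof is short, and everything else is bookkeeping dictated by the layering of $\calL'$.
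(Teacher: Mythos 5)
Your proof is correct and follows essentially the same route as the paper's: a mutual induction through the three layers of $\calL'$ (states, distributions, sets), with the same combinatorial core --- your isolated lemma on distributions is exactly the paper's step from its set-similarity claim to its distribution-equivalence claim, down to the same choice of witness $T'$ (the support points $\sim_n$-related to $T$) and the same refuting class $E$. The only blemish is cosmetic: the labels $\Rightarrow$ and $\Leftarrow$ in your proof of that lemma are swapped relative to its statement, but both directions are in fact proved.
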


\begin{proof}
We assume $\calL = (S,\Sigma,\mathord{\tran{}})$ and $\calL' =
	(S',\Sigma',\mathord{\ctran{}})$ as above.
	We recall that $d_1,d_2\in\dist{S}$ are called $R$-equivalent, for an equivalence $R$ on
	$S$, iff $d_1(E)=d_2(E)$ 
for each $R$-class $E$.
We use the expression $d_1\sim_n d_2$ to denote
that $d_1,d_2$ are	$\mathord{\sim_n}$-equivalent (for $\mathord{\sim_n}$ in $\calL$).

We also say that subsets $T_1,T_2$ of $S$ are \emph{$R$-similar} if
	$T_1,T_2$ represent the same $R$-classes, i.e.,
if the sets $\{\,E\mid E$ is an $R$-class, $E\cap T_1\neq\emptyset\,\}$
and $\{\,E\mid E$ is an $R$-class, $E\cap T_2\neq\emptyset\,\}$ are the
	same.
By $T_1\sim_n T_2$ we denote that  $T_1,T_2$ are
$\mathord{\sim_n}$-similar  (for $\mathord{\sim_n}$ in $\calL$).

	By $\mathord{\sim}$ we denote bisimilarity in (the pLTS) $\calL$, and by 
 $\mathord{\sim'}$ we denote bisimilarity in
	(the standard LTS)
	$\calL'$.
It suffices to show that for each $n\in\N$ we have:	
	\begin{enumerate}
	\item
$s_1\sim_n s_2$ iff $s_1 \sim'_{3n} s_2$.
	\item
$d_1\sim_n d_2$ iff $d_1 \sim'_{3n+2} d_2$.
\item 
$T_1\sim_n T_2$ iff $T_1 \sim'_{3n+1} T_2$.
	\end{enumerate}
	(By $s_i$ we denote elements of $S$, by $d_i$ relevant
	distributions, and by $T_i$ relevant sets.)

We show Claims 1--3 by the following observations:
\begin{itemize}
\item 
Claim $1$ trivially holds for $n=0$.
\item
If Claim $1$ holds for $n$, then Claim $3$ holds for $n$:

			We observe that $T_1\sim_{n}T_2$ iff 
			for each $i\in\{1,2\}$ and each transition 
			$T_i\ctran{\#}s_i$ (in $\calL'$)
there is a transition $T_{3-i}\ctran{\#}s_{3-i}$ such that 
			$s_1\sim_n s_2$.
			Assuming that $s_1\sim_n s_2$ iff
			$s_1\sim'_{3n} s_2$ (Claim 1), we thus get 
that $T_1\sim_{n}T_2$	
iff  $T_1\sim'_{3n+1}T_2$ (Claim 3).

\item
If Claim $3$ holds for $n$, then Claim $2$ holds for $n$:

Let $d_1 \sim_{n} d_2$, $i\in\{1,2\}$, and  
$d_i\ctran{\rho}T_i$ (hence $d_i(T_i)\geq
			\rho$). 
			Put $T_{3-i}=\{\,s'\in\support{d_{3-i}}\mid
			 s'\sim_n s$ for some $s\in T_i\,\}$. 
 For each $\mathord{\sim_n}$-class $E$ we have $d_1(E)=d_2(E)$, and
			 thus $d_i(T_i\cap E)\leq d_{3-i}(T_{3-i}\cap E)$.
			 Hence $d_i(T_i)\leq d_{3-i}(T_{3-i})$, and thus
			$d_{3-i}\ctran{\rho}T_{3-i}$; moreover,
			in 	$T_1,T_2$
			the same $\mathord{\sim_n}$-classes are represented,
			i.e., 
			$T_1\sim_n T_2$, hence $T_1\sim'_{3n+1} T_2$
			(assuming Claim 3).
		Therefore $d_1 \sim'_{1+(3n+1)} d_2$, i.e., 
		$d_1 \sim'_{3n+2} d_2$.

Now let $d_1 \not\sim_{n} d_2$; there is thus a $\mathord{\sim_n}$-class $E$
			such that $d_i(E)>d_{3-i}(E)$ for some
			$i\in\{1,2\}$. 
			For the transition $d_i\ctran{\rho}T_i$ where
			$T_i=\support{d_i}\cap E$ and $\rho=d_i(T_i)$
and any transition
			$d_{3-i}\ctran{\rho}T_{3-i}$ we have
			$T_1\not\sim_n T_2$; indeed, from $d_{3-i}(T_{3-i})\geq
			\rho$
			we deduce
			$d_{3-i}(T_{3-i})\geq \rho=d_i(T_i)=
			d_i(E)>d_{3-i}(E)$, and the fact 
			that  $T_{3-i}\subseteq\support{d_{3-i}}$ then
entails the existence of some $s\in T_{3-i}\smallsetminus E$.
Since $T_1\not\sim_n T_2$ entails
			$T_1\not\sim'_{3n+1}T_2$ (assuming Claim 3), we get 
		$d_1 \not\sim'_{1+(3n+1)} d_2$, i.e., $d_1 \not\sim'_{3n+2} d_2$.
\item
	If Claim $2$ holds for $n$, then Claim $1$ holds for $n{+}1$:

By definition (of relations $\mathord{\sim_n}$ in a pLTS), $s_1\sim_{n+1} s_2$
iff for each $i\in\{1,2\}$ and each transition 
			$s_i\trans{a}d_i$ (in $\calL$) there is 
			a transition $s_{3-i}\trans{a}d_{3-i}$ such
			that $d_1\sim_{n}d_2$. 
			By definition of $\calL'$ 
			the only transitions from a state $s\in S$ in $\calL'$
			are the transitions $s\ctran{a}d$ 
			where $s\trans{a}d$ in $\calL$.
			With the assumption that $d_1\sim_{n}d_2$ iff
			$d_1\sim'_{3n+2}d_2$ (Claim 2) we
thus easily verify that 
$s_1\sim_{n+1} s_2$ iff  $s_1\sim'_{1+(3n+2)} s_2$, i.e., 
$s_1\sim_{n+1} s_2$ iff  $s_1\sim'_{3(n+1)} s_2$. \qedhere
\end{itemize}
\end{proof}
\noindent
\emph{Remark.} 
We have confined ourselves to pLTSs $\calL=(S,\Sigma,\mathord{\tran{}})$ that
are image-finite and where $\Sigma$ is finite and $S$ is at most
countable. This framework is sufficient for our intended applications,
but the above transformation of a pLTS $\calL$ to the LTS $\calL'$ can
be applied to general pLTSs as well (where 
also equivalences $\mathord{\sim_\lambda}$ for infinite ordinals $\lambda$ are
considered).

\section{Upper Bounds for Subclasses of pPDA}\label{sec-upper-bounds}

The transformation of a pLTS $\calL$ to an LTS $\calL'$ described in
Section~\ref{sec-prob-to-nondet} is now applied to derive decidability
and complexity results for bisimilarity for pPDA (and subclasses
thereof) from known results for standard (i.e., non-probabilistic)
versions.

We use the fact that if $\calL$ is generated by a probabilistic model $M$
with a finite control unit
or a stack (which is, in particular, the case of pPDA, pBPA, pOCA,
pvPDA), 
then there are straightforward transformations of $M$ to
non-probabilistic versions $M'$ that represent $\calL'$.  An explicit
presentation of $M'$ might be exponentially larger than $M$, so using
complexity results for standard models as a ``black box'' to derive
complexity results for probabilistic models incurs an exponential
complexity blow-up.  This complexity increase is not significant for
general pPDA, since bisimilarity for PDA is known to be
non-elementary~\cite{BGKM13} (it is even
Ackermann-hard~\cite{JancarFossacs14} for the model studied
in~\cite{Senizergues05}).  On the other hand, in the cases of standard
OCA, vPDA, BPA, the known upper bounds are PSPACE, EXPTIME, and
2-EXPTIME, respectively. (For OCA and vPDA the upper bounds match the
known lower bounds, the problem for BPA is only known to be
EXPTIME-hard.)  

Since $M'$ arises from $M$ by a specific enhancement causing only a
``local'' exponential increase that can be left implicit (i.e., $M$
can be viewed as representing $M'$ without an explicit construction of
$M'$), by studying the algorithms for standard cases we might be able
to verify that the exponential increase can be avoided.  We will show
that this is indeed the case for pvPDA, pBPA, and pOCA: here we will
argue that the exponential increase only plays a role in the changed
protocol (that transforms a current pair of states to a new one),
which can be still performed in polynomial time with bounded
alternation and does not affect the mentioned PSPACE, EXPTIME,
2-EXPTIME upper bounds.  In the general case of pPDA we use the
``black-box'' reduction (in Section~\ref{sec-pPDA-decidable}), since a
similar argument that the complexity bound does not increase would
require recalling the involved algorithms in the standard
case~\cite{Senizergues05,JancarIcalp14} and, as we have mentioned above,
the complexity is already non-elementary in the standard case.

\subsection{Bisimilarity of pPDA is
Decidable}\label{sec-pPDA-decidable}

As announced, here we show a ``black-box'' use 
of the above transformation in the general
case of pPDA.
We have introduced PDA as pPDA with only Dirac distributions.  In
fact, PDA are the standard nondeterministic pushdown automata (with no
$\varepsilon$-transitions).  For PDA, bisimilarity is known to be
decidable~\cite{Senizergues05} (see also~\cite{JancarIcalp14}), with
no explicit complexity upper bound.  To show the decidability of
bisimilarity for pPDA, it suffices to show how to realize a
transformation of a pPDA $\Delta$, representing a pLTS $\calL$, to a
PDA $\Delta'$ representing the LTS $\calL'$ as defined in
Section~\ref{sec-prob-to-nondet}.

This task is straightforward, since we can naturally represent the
relevant distributions and the subsets of their supports by using
either the control state unit or (the top of) the stack. 
Because of the mentioned subsets of supports, we can thus increase the size of the
control unit or of the stack alphabet exponentially; the size of 
the
action alphabet increases similarly (due to actions $\rho\in\Q$).
We now describe the ``stack option'' in detail.

\medskip

For a pPDA $\Delta = (Q,\Gamma,\Sigma,\mathord{\btran{}})$
we define the PDA $\Delta' =
(Q,\Gamma',\Sigma',\mathord{\ctran{}})$ 
as follows:
\begin{itemize}
 \item The stack alphabet $\Gamma'$ arises from 
   $\Gamma$ by adding the following fresh symbols:
for every distribution $d$ such that $\Delta$ contains 
a rule $pX \btran{a} d$
 we add a symbol $\stacksymba{d}$; for every nonempty
		set $T$ such that 
		$T\subseteq \support{d}$ for some of the above
		distributions $d$ 		
		we add a symbol $\stacksymba{T}$.

 \item
 $\Sigma'=\Sigma\cup W\cup\{\#\}$
where the parts of the union are pairwise disjoint
		and $W=\{\,\rho\in \mathbb{Q}\mid \rho=d(T)$ for some
		rule $pX \btran{a} d$ and $\emptyset\neq
T\subseteq\support{d}\,\}$.
 \item
The set $\mathord{\ctran{}}$ of rules of $\Delta'$
		is defined as follows.
We choose an arbitrary state $q_0\in Q$.		
Every
rule $qX \btran{a} d$ of $\Delta$ gives rise to the following rules of
$\Delta'$:
		\begin{itemize}		
			\item
			$qX\ctran{a}q_0\stacksymba{d}$;
		\item
		$q_0\stacksymba{d} \ctran{\rho} q_0\stacksymba{T}$ for
				all $\rho \in W$ and 
		$T\subseteq \support{d}$ where $d(T)\geq \rho$;
	\item				
$q_0\stacksymba{T} \ctran{\#} p\alpha$ for 				
				each above defined
symbol $\stacksymba{T}$ and $p\alpha \in T$.
		\end{itemize}
		\end{itemize}

\begin{exa}\label{ex-gen-constr-PDA}
Consider the pPDA $\Delta = (\{p,q\}, \{X, Y\}, \{a,b\}, \mathord{\btran{}})$ with the following rules:
\[ \textstyle
  p X \btran{a} \frac13 q + \frac23 p Y X, \qquad 
  q Y \btran{a} \frac13 p + \frac23 p X, \qquad
  p Y \btran{b} q Y, \qquad
  q Y \btran{a} p X Y.
\]
The construction above yields the PDA $\Delta' = (\{p,q\}, \Gamma', \Sigma', \mathord{\ctran{}})$ where
\begin{align*} 
	\Gamma' \ &= \  \big\{X,Y\big\}\cup
	\big\{\stacksymba{\frac13 q + \frac23 p
Y X}, \stacksymba{\frac13 p + \frac23 p X}, \stacksymba{1 q Y},
\stacksymba{1 p X Y}\big\}\,\cup \\
& \hspace{2em} \cup\big\{\stacksymba{\{q, p Y X\}},
\stacksymba{\{q\}}, \stacksymba{\{p Y X\}}, \stacksymba{\{p, p X\}},
\stacksymba{\{p\}}, \stacksymba{\{p X\}}, \stacksymba{\{q Y\}},
\stacksymba{\{p X Y\}}\big\}, \\
\Sigma' \ &= \ \textstyle\{a, b, \frac13, \frac23, 1, \#\},
\end{align*}
and, choosing $q_0 = p$, the rules in $\mathord{\ctran{}}$ are
given in Figure~\ref{fig:rulesofexamplepda}.
\begin{figure}[hbt]
\begin{gather*} 
p X \ctran{a} p 
\stacksymba{\frac13 q + \frac23 p Y X} \qquad 
q Y \ctran{a} p \stacksymba{\frac13 p + \frac23 p X} \qquad 
p Y \ctran{b} p \stacksymba{1 q Y} \qquad 
q Y \ctran{a} p \stacksymba{1 p X Y}
\\
\begin{aligned}
 p \stacksymba{\frac13 q + \frac23 p Y X} &\ctran{1} p \stacksymba{\{q, p Y X\}} 
  \qquad& p \stacksymba{\frac13 p + \frac23 p X} &\ctran{1} p \stacksymba{\{p, p X\}} \\
 p \stacksymba{\frac13 q + \frac23 p Y X} &\ctran{\frac23} p \stacksymba{\{q, p Y X\}}
  & p \stacksymba{\frac13 p + \frac23 p X} &\ctran{\frac23} p \stacksymba{\{p, p X\}} \\
 p \stacksymba{\frac13 q + \frac23 p Y X} &\ctran{\frac13} p \stacksymba{\{q, p Y X\}} 
  & p \stacksymba{\frac13 p + \frac23 p X} &\ctran{\frac13} p \stacksymba{\{p, p X\}} \\
 p \stacksymba{\frac13 q + \frac23 p Y X} &\ctran{\frac13} p \stacksymba{\{q\}} 
  & p \stacksymba{\frac13 p + \frac23 p X} &\ctran{\frac13} p \stacksymba{\{p\}} \\
 p \stacksymba{\frac13 q + \frac23 p Y X} &\ctran{\frac23} p \stacksymba{\{p Y X\}} 
  & p \stacksymba{\frac13 p + \frac23 p X} &\ctran{\frac23} p \stacksymba{\{p X\}} \\
 p \stacksymba{\frac13 q + \frac23 p Y X} &\ctran{\frac13} p \stacksymba{\{p Y X\}}
  & p \stacksymba{\frac13 p + \frac23 p X} &\ctran{\frac13} p \stacksymba{\{p X\}} \\
 p \stacksymba{1 q Y} &\ctran{1} p \stacksymba{\{q Y\}}
  & p \stacksymba{1 p X Y} &\ctran{1} p \stacksymba{\{p X Y\}} \\
 p \stacksymba{1 q Y} &\ctran{\frac23} p \stacksymba{\{q Y\}}
  & p \stacksymba{1 p X Y} &\ctran{\frac23} p \stacksymba{\{p X Y\}} \\
 p \stacksymba{1 q Y} &\ctran{\frac13} p \stacksymba{\{q Y\}}
  & p \stacksymba{1 p X Y} &\ctran{\frac13} p \stacksymba{\{p X Y\}} \\
p \stacksymba{\{q, p Y X\}} &\ctran{\#} q 
 & p \stacksymba{\{p, p X\}} &\ctran{\#} p \\
p \stacksymba{\{q, p Y X\}} &\ctran{\#} p Y X 
 & p \stacksymba{\{p, p X\}} &\ctran{\#} p X \\
p \stacksymba{\{q\}}        &\ctran{\#} q 
 & p \stacksymba{\{p\}} &\ctran{\#} p \\
p \stacksymba{\{p Y X\}}    &\ctran{\#} p Y X 
 & p \stacksymba{\{p X\}} &\ctran{\#} p X \\
p \stacksymba{\{q Y\}}      &\ctran{\#} q Y
 & p \stacksymba{\{p X Y\}} &\ctran{\#} p X Y
\end{aligned}
\end{gather*}
\caption{Set $\ctran{}$ of rules of the PDA $\Delta'$ from
Example~\ref{ex-gen-constr-PDA}}\label{fig:rulesofexamplepda}
\end{figure}
\end{exa}

\emph{Remark.} We note that the choice of $q_0$ plays no role. We
could
introduce a fresh ``don't care'' state, but the choice of $q_0$ from
$Q$ makes clear that the
original state set $Q$ need not be extended. Hence if $\Delta$ is a pBPA 
then $\Delta'$ is a BPA.

\medskip

Let $\calL=\calL_\Delta$ be the pLTS generated by a pPDA $\Delta$. 
The (standard) LTS 
$\calL_{\Delta'}$
defined by
$\Delta'$ is bigger than $\calL'$ as defined in
Section~\ref{sec-prob-to-nondet}, since 
the new stack symbols
 $\stacksymba{d}$ and  $\stacksymba{T}$ can occur anywhere in the
 stack in
 configurations of $\Delta'$, not only at the top as intended.
But if we restrict  $\calL_{\Delta'}$ to the states of 
$\calL_\Delta$ (i.e., to the configurations of $\Delta$), and close
this set under reachability (in $\calL_{\Delta'}$), then we obviously
get an LTS that is isomorphic with $\calL'$.
Hence by Lemma~\ref{lem:gen-nondet-constr-correct} we deduce 
the following theorem:

\begin{thm}\label{thm:prob-to-nondet}
For any pPDA
$\Delta$ there is a PDA $\Delta'$ constructible in exponential time such that
for any configurations $q_1\gamma_1, q_2\gamma_2$ of $\Delta$ we have
$q_1\gamma_1 \sim q_2\gamma_2$ in $\calL_\Delta$ if and only if $q_1\gamma_1
	\sim q_2\gamma_2$ in $\calL_{\Delta'}$.
Hence bisimilarity for pPDA is decidable.
\end{thm}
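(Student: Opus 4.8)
\emph{Proof proposal.}
The plan is to let $\Delta'$ be exactly the PDA constructed above, and to verify three things: that $\Delta'$ has exponential size and is computable in exponential time; that the fragment of $\calL_{\Delta'}$ relevant to configurations of $\Delta$ is isomorphic to (or at any rate bisimilar to) the LTS $\calL'$ of Section~\ref{sec-prob-to-nondet}; and that decidability of bisimilarity for pPDA then reduces to the known decidability for standard PDA. For the size bound: for each rule $qX\btran{a}d$ of $\Delta$ we add one stack symbol $\stacksymba{d}$ and at most $2^{|\support{d}|}$ symbols $\stacksymba{T}$, so $\Gamma'$ is exponential in $|\Delta|$; each new action $\rho=d(T)$ is a sum of at most $|\support{d}|$ of the probabilities appearing in $\Delta$ and hence has polynomial bit-length, while there are at most exponentially many such $\rho$; and $\mathord{\ctran{}}$ comprises, per rule of $\Delta$, one rule of the first kind, exponentially many of the second, and polynomially many of the third. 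So $\Delta'$ is produced by an exponential-time procedure (and, as already remarked, if $\Delta$ is a pBPA then $\Delta'$ is a BPA).

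For the second point the decisive observation is that along any run of $\Delta'$ started in a configuration $q\beta\in Q\Gamma^*$ of $\Delta$, the fresh symbols $\stacksymba{d}$ and $\stacksymba{T}$ can only occur at the top of the stack: the only rule pushing such a symbol, $qX\ctran{a}q_0\stacksymba{d}$, merely rewrites the topmost $X$, and every rule applicable to a head $q_0\stacksymba{d}$ or $q_0\stacksymba{T}$ pops that symbol. Hence the configurations of $\Delta'$ reachable from $q\beta$ are precisely the strings in $Q\Gamma^*$, the strings $q_0\stacksymba{d}\beta'$, and the strings $q_0\stacksymba{T}\beta'$ with $\beta'\in\Gamma^*$; and the map sending $q\beta\mapsto q\beta$, sending $q_0\stacksymba{d}\beta'$ to the distribution $d'$ with $d'(p\alpha\beta')=d(p\alpha)$, and sending $q_0\stacksymba{T}\beta'\mapsto\{\,p\alpha\beta'\mid p\alpha\in T\,\}$, takes the reachability closure of $Q\Gamma^*$ inside $\calL_{\Delta'}$ onto the LTS $\calL'$ built from $\calL=\calL_\Delta$ as in Section~\ref{sec-prob-to-nondet}, carrying the three kinds of transitions of $\Delta'$ to the three kinds of transitions of $\calL'$. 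This map is an isomorphism, or, in the degenerate case where distinct $\Delta$-configurations name the same distribution, merely a functional bisimulation — which is harmless, since bisimilarity is insensitive to such identifications.

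Finally, bisimilarity of two states depends only on the part of the transition system reachable from them, so $q_1\gamma_1\sim q_2\gamma_2$ in $\calL_{\Delta'}$ iff it holds in that reachability-closed fragment, iff (by the map above) it holds in $\calL'$, iff (by Lemma~\ref{lem:gen-nondet-constr-correct}) $q_1\gamma_1\sim q_2\gamma_2$ in $\calL_\Delta$; this is the asserted equivalence. Decidability of bisimilarity for pPDA is then immediate: given $\Delta$ and two configurations, compute $\Delta'$ in exponential time and run the decision procedure of~\cite{Senizergues05} (or~\cite{JancarIcalp14}) for bisimilarity of standard PDA on $\calL_{\Delta'}$; we do not care about the complexity blow-up, the problem being non-elementary already in the standard case. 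I expect no genuine obstacle here, since the substantive content is outsourced to Lemma~\ref{lem:gen-nondet-constr-correct} and to the cited decidability result; the only point demanding care is the bookkeeping that the fresh stack symbols never get buried, which is exactly what makes the reachable fragment of $\calL_{\Delta'}$ coincide with $\calL'$ rather than with something larger, and this is forced by the shape of the rules of $\Delta'$.
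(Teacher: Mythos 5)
Your proposal is correct and follows essentially the same route as the paper: construct the ``stack-version'' PDA $\Delta'$, observe that the fresh symbols $\stacksymba{d},\stacksymba{T}$ can only occur at the top of the stack so that the reachability closure of the $\Delta$-configurations inside $\calL_{\Delta'}$ coincides with $\calL'$, and conclude via Lemma~\ref{lem:gen-nondet-constr-correct} together with the decidability of bisimilarity for standard PDA. Your extra care about the non-injective degenerate case (functional bisimulation rather than isomorphism) and the explicit size accounting are harmless refinements of what the paper leaves as ``obvious''.
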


In Sections~\ref{sec-upper-bounds-visibly} 
and~\ref{sec-upper-bounds-pBPA} we will refer to the above
``stack-version'' PDA $\Delta'$ constructed to a given pPDA $\Delta$.
In Section~\ref{sec:bispOCAinPSPACE} we will use the following
``control-state version'' $\Deltac'$.

For a pPDA $\Delta = (Q,\Gamma,\Sigma,\mathord{\btran{}})$,
the PDA $\Deltac' =
(Q',\Gamma,\Sigma',\mathord{\ctran{}})$ arises analogously as $\Delta'$ but
with the following modifications: 
\begin{itemize}
	\item
The symbols $\stacksymba{d}$, $\stacksymba{T}$ are added to $Q$
instead of $\Gamma$. 
\item
Every
rule $qX \btran{a} d$ of $\Delta$ give rise to the following rules of
$\Deltac'$:

$qX\ctran{a}\stacksymba{d}X$;
$\stacksymba{d}X \ctran{\rho} \stacksymba{T}X$ (for $\rho$ and $T$ as
		in $\Delta'$); 
$\stacksymba{T}X \ctran{\#} p\alpha$ for $p\alpha \in T$.				\end{itemize}
We note that  
if $\calL=\calL_\Delta$ then $\calL_{\Deltac'}$ gets isomorphic
with the LTS $\calL'$ 
 defined in
Section~\ref{sec-prob-to-nondet} when we identify 
its states $\stacksymba{d}X\gamma$, $\stacksymba{d}Y\gamma$
and also $\stacksymba{T}X\gamma$, $\stacksymba{T}Y\gamma$.
In other words, in the configurations 
$\stacksymba{d}\alpha$, $\stacksymba{T}\alpha$ the first stack symbol
plays no role. We also note that $\Deltac'$ is an OCA when $\Delta$ is
a pOCA.

\medskip

\emph{Remark.} 
We will show that known algorithms for vPDA, BPA, OCA
in principle also work for pvPDA, pBPA, pOCA with no substantial complexity
increase, since they can be viewed as working on ``big'' PDA 
$\Delta'$ or $\Deltac'$
though these big PDA will be only implicitly presented by ``small'' pPDA
$\Delta$. We could try to define a more abstract notion 
of ``concise PDA'' that represent bigger standard PDA 
so that the algorithms for standard cases would also work for concise
PDA without a
substantial complexity increase, and
pvPDA, pBPA, pOCA would be just special cases of such concise PDA.
But we leave such an abstraction as a mere possibility, since it would add further technicalities.

\subsection{Bisimilarity of pvPDA is in EXPTIME} \label{sec-upper-bounds-visibly}

It is shown in~\cite[Theorem 3.3]{SrbaVisiblyPDA:2009} that the
bisimilarity problem for (standard) vPDA is EXPTIME-complete.
We will show that the same holds for pvPDA.
In this section we show the upper bound:

\begin{thm} \label{thm-upper-bounds-visibly}
The bisimilarity problem for pvPDA is in EXPTIME.
\end{thm}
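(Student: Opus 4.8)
The plan is to reduce the bisimilarity problem for a given pvPDA $\Delta$ to the bisimilarity problem for the ``stack-version'' PDA $\Delta'$ constructed in Section~\ref{sec-pPDA-decidable}, but to observe that $\Delta'$, while exponentially larger than $\Delta$ when written out explicitly, is in fact a \emph{visibly} pushdown automaton whose description is only implicitly exponential: the new stack symbols $\stacksymba{d}$ and $\stacksymba{T}$, together with the new actions $\rho\in W$ and $\#$, can be grouped so that each macro-round (three rounds of the $\calL_{\Delta'}$-game corresponding to one round of the $\calL_{\Delta}$-game) is visibly well-behaved. Concretely, the action $a$ that triggers a rule $qX\btran{a}d$ is a call, return, or internal action in the original pvPDA; the follow-up action $\rho$ reads the (single) symbol $\stacksymba{d}$ off the top and replaces it by a single symbol $\stacksymba{T}$, so it is naturally an internal action; and the action $\#$ pops $\stacksymba{T}$ and pushes $\alpha\in\Gamma^{\le 2}$, which is a return, internal, or call action exactly mirroring the original classification of $a$. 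Hence $\Delta'$ is a pvPDA, in fact a (standard) vPDA, under the refined partition $\Sigma'=\Sigmar'\cup\Sigmai'\cup\Sigmac'$, and the known EXPTIME algorithm for vPDA bisimilarity applies to it.

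The issue is that a black-box invocation of the EXPTIME vPDA algorithm on $\Delta'$ costs exponential time in $|\Delta'|$, hence doubly exponential in $|\Delta|$, which only gives the 2-EXPTIME bound already noted in the introduction. So the real work is to recall (a self-contained version of) the EXPTIME upper bound for vPDA bisimilarity and check that it degrades gracefully when the input vPDA is presented concisely as a pvPDA. The standard route here is: bisimilarity of two vPDA configurations reduces to a question about a ``product'' pushdown system whose control states are pairs of control states (plus some bookkeeping), and EXPTIME-ness comes from the fact that bisimilarity of vPDA is decidable by a fixpoint computation over the finitely many ``bisimulation types'' of heads (a head $qX$ together with a regular set of contexts), the number of such types being exponential in the size of the vPDA but the membership/refinement steps being polynomial. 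I would reproduce this argument directly for pvPDA: define bisimulation types over heads $qX$ of the original $\Delta$ (not of $\Delta'$), and show that the refinement step — which, for $\Delta'$, must simulate a whole macro-round — can be carried out in time polynomial in $|\Delta|$ with a bounded number of alternations, because the exponentially many subsets $T\subseteq\support{d}$ and the ``balance'' conditions $d_{3-j}(T_{3-j})\ge d_j(T_j)$ are exactly the kind of $\exists\forall$ question that a PSPACE (indeed polynomial-alternation) subroutine handles, and this subroutine sits inside each step of the outer exponential fixpoint without multiplying the exponent.

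More precisely, the key lemma to establish is: one round of the probabilistic bisimulation game from a pair of configurations $q_1\gamma_1,q_2\gamma_2$ of $\Delta$ can be resolved — given oracle access to the relation ``$\sim_n$ on heads'' (equivalently, to the current approximant of the bisimulation-type assignment) — in time polynomial in $|\Delta|$ and the representation of the current approximant, using a constant number of quantifier alternations. Granting this lemma, the outer algorithm is the usual one: maintain an equivalence on the (exponentially many in $|\Delta|$, hence still only exponential) bisimulation types of heads of $\Delta'$ restricted to the reachable, intended configurations; iterate the refinement step until a fixpoint; this is an EXPTIME computation because there are exponentially many types, each refinement sweep touches each type once, and each touch runs the polynomial-alternation subroutine of the key lemma. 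Correctness is then immediate from Lemma~\ref{lem:gen-nondet-constr-correct} (which identifies $\sim$ in $\calL_\Delta$ with $\sim$ in $\calL_{\Delta'}$) together with the isomorphism, up to reachability, between $\calL_{\Delta'}$ and the $\calL'$ of Section~\ref{sec-prob-to-nondet}.

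The main obstacle I expect is getting the EXPTIME \emph{upper bound for vPDA bisimilarity itself} into a shape transparent enough to see that it ``commutes'' with the concise presentation; the cited proof in~\cite{SrbaVisiblyPDA:2009} goes through~\cite{Walukiewicz2001} (parity games on pushdown systems), which is powerful but opaque for this purpose, so I would instead give a direct fixpoint/bisimulation-type argument for vPDA — thereby also delivering the promised ``new proof of the EXPTIME upper bound in the standard case'' — and then bolt on the macro-round subroutine. Secondary care is needed in the bookkeeping: the construction of $\Delta'$ uses the fixed state $q_0$ and lets $\stacksymba{d},\stacksymba{T}$ occur anywhere on the stack, so I must restrict to the reachable sublattice of intended configurations (as in the paragraph before Theorem~\ref{thm:prob-to-nondet}) and check this restriction is respected by the visibly-pushdown partition and by the type-refinement fixpoint; and I must confirm that the action alphabet $W$ of ``probability'' actions, though polynomially many in $|\Delta|$ (one $\rho=d(T)$ per distribution and subset, but distinct values number at most polynomially many since each $d$ has support of size $\le 2$ in... — more carefully, at most exponentially many but that is harmless as they are only read, never stored), does not blow up the relevant counts beyond a single exponential.
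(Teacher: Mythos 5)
Your overall strategy is the right one and matches the paper's: do \emph{not} invoke the vPDA algorithm as a black box on $\Delta'$, but instead give a direct, self-contained EXPTIME fixpoint argument in which the only place the exponential blow-up of $\Delta'$ could enter --- the simulation of one three-round macro-round of the game in $\calL_{\Delta'}$ --- is packaged as a polynomial-time, bounded-alternation subroutine. You also correctly identify the secondary issues (restricting $\calL_{\Delta'}$ to the intended reachable part, the harmlessness of the probability actions $\rho$, the adjustment needed so that $\#$ respects the visibility partition).

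However, there is a genuine gap at the centre of the argument: you never define the object over which the outer fixpoint runs, and the description you do give (``bisimulation types of heads, a head $qX$ together with a regular set of contexts'', refined by partition refinement) is not concrete enough to check and does not explain where visibility is actually used. The paper's proof supplies exactly this missing piece. It defines, for each pair of heads, a relation $(pX,qY)\force{a}Out$ (``in the three-round protocol Attacker can play $a$ so that Defender loses or the outcome lands in $Out$'', with $Out$ constrained by whether $a$ is a call, internal, or return action), and then a least-fixpoint relation $(pX,qY)\forcelong Out$ with $Out\subseteq Q\times Q$, meaning Attacker can force either a win or the simultaneous popping of the current top symbols on both sides into a control-state pair from $Out$. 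The three deduction rules defining $\forcelong$ are keyed precisely to the partition $\Sigmar\cup\Sigmai\cup\Sigmac$: visibility guarantees the two stacks shrink and grow in lockstep, which is what makes $Out\subseteq Q\times Q$ (only $2^{|Q|^2}$ candidates) a sufficient ``type'' and keeps the fixpoint single-exponential. Finally, bisimilarity of the actual input configurations is decided compositionally via $\calA(X\alpha',Y\beta')=\{(p,q)\mid (pX,qY)\forcelong\calA(\alpha',\beta')\}$, working from the stack bottoms upward --- a step your proposal does not address at all (you only speak of types of heads, not of how to combine them over arbitrary stack contents of the two given configurations). Your ``key lemma'' (the bounded-alternation macro-round) is correct and is indeed used in the paper, but it is the easy half; without the $\forcelong$/$\calA$ machinery, or an equally concrete substitute, the proof does not go through.
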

In~\cite{SrbaVisiblyPDA:2009} the upper bound 
is proved by a reduction to the model-checking problem
 for (general) PDA and the modal $\mu$-calculus;
the latter problem is in EXPTIME by~\cite{Walukiewicz2001}.
This reduction does not apply in the probabilistic case, and if we
use
the reduction from Section~\ref{sec-prob-to-nondet} explicitly (as in 
Section~\ref{sec-pPDA-decidable}) and apply the result 
of~\cite{Walukiewicz2001} to the resulting 
exponentially bigger instance, 
then we only derive 
a double-exponential upper bound. In fact, if for a given pvPDA
$\Delta$ we construct the PDA $\Delta'$ as in 
Section~\ref{sec-pPDA-decidable}, then $\Delta'$ might be 
formally not a
vPDA (since the action $\#$ can have different effects on the stack
height), but it is straightforward to adjust the
construction so that $\Delta'$ becomes a vPDA.

Nevertheless,
we give a self-contained proof that the bisimilarity problem for pvPDA
is in EXPTIME; since vPDA is a special case, we thus also provide a
new proof of the EXPTIME upper bound in the standard case.

\begin{proof}
We consider a pvPDA 
$\Delta=(Q,\Gamma,\Sigma,\mathord{\btran{}})$ with the respective partition
 $\Sigma = \Sigmar \cup \Sigmai \cup \Sigmac$; it generates the
	respective pLTS $\calL_\Delta$.
	By $\Delta'$ we refer to the (standard) PDA arising as in 
Section~\ref{sec-pPDA-decidable},
but we do not assume constructing
it explicitly. We recall that $p\alpha\sim q\beta$ in $\calL_\Delta$
	iff $p\alpha\sim q\beta$ in $\calL_{\Delta'}$.

For each $a \in \Sigma$ we now define a relation $\mathord{\force{a}}$ (not necessarily a function) between pairs and sets of pairs:
	For $a\in\Sigma$, $p,q\in Q$, $X,Y\in\Gamma$, and $Out\subseteq 
		(Q\Gamma^{\leq 2})\times (Q\Gamma^{\leq
		2})$ we put
	$$(pX,qY)\force{a}Out$$
if the following conditions hold:
\begin{enumerate}
\item
	In the protocol running from $(pX,qY)$, i.e., in the respective
		three-round bisimulation game,
			Attacker can use an $a$-transition so that it
			is 
			 guaranteed
		that Defender either loses (having
			no response with action $a$) or the outcome is a pair
		from $Out$.	
	\item
If $a\in\Sigmac$, then $Out\subseteq 
		Q\Gamma\Gamma\times Q\Gamma\Gamma$;
if $a\in\Sigmai$, then $Out\subseteq 
		Q\Gamma\times Q\Gamma$;
if $a\in\Sigmar$, then $Out\subseteq 
		Q\times Q$.
\end{enumerate}			
We can easily verify 
that all relations $\mathord{\force{a}}$ can be constructed in exponential
time; indeed, to verify that $(pX,qY)\force{a}Out$ for concrete
$p,q,X,Y,a,Out$, it suffices to use
a polynomial-time algorithm with bounded
alternation (that simulates three rounds of the bisimulation game 
in $\calL_{\Delta'}$ starting in $(pX,qY)$).

Now by $\mathord{\forcelong}$ we denote the least relation
between the set $Q\Gamma\times Q\Gamma$ and 
	the set $2^{Q\times Q}$ (of subsets of $(Q\times Q)$) 
	satisfying the following (inductive) conditions:
\begin{enumerate}
\item
	if $(pX,qY)\force{a}Out$ for $a\in\Sigmar$ (hence $Out\subseteq 
		Q\times Q$),
		then $(pX,qY)\forcelong Out$;
\item
if 
$(pX,qY)\force{a}Out_1$ for $a\in\Sigmai$, 
		$Out_2\subseteq Q\times Q$,
		and 
	$(p'X',q'Y')\forcelong Out_2$ for each $(p'X',q'Y')\in Out_1$,
		then $(pX,qY)\forcelong Out_2$;
	\item
if 
$(pX,qY)\force{a}Out_1$ for $a\in\Sigmac$,
	$Out_2\subseteq Q\times Q$,
	and for each $(p'X_1X_2,q'Y_1Y_2)\in Out_1$ there is
		$Out'\subseteq Q\times Q$ such that 
		$(p'X_1,q'Y_1)\forcelong Out'$ and
		$(p''X_2,q''Y_2)\forcelong Out_2$ for each
		$(p'',q'')\in Out'$, then 
		 $(pX,qY)\forcelong Out_2$.
\end{enumerate}
It is obvious that $(pX,qY)\forcelong Out$ implies that Attacker has a strategy
guaranteeing that in each play
starting in $(pX,qY)$ he either wins or a pair
$(p'\varepsilon,q'\varepsilon)$ with $(p',q')\in Out$ is reached (when
he adheres to the strategy). The opposite implication also holds,
since if Attacker has such a strategy, then he can guarantee reaching
his goal in $n$ steps, for some $n\in\N$. (This follows from
image-finiteness, which entails that Defender always has a finite,
even bounded, number of possible responses.) 

To construct the set $\mathord{\forcelong}\subseteq 
(Q\Gamma\times Q\Gamma)\times 2^{Q\times Q}$, we can use 
the standard least-fixed-point computation:
we put $\mathord{\forcelong_0}=\emptyset$, and from  
$\mathord{\forcelong_i}$ we get ${\forcelong_{i+1}}\supseteq {\forcelong_{i}}$ 
by applying the above 
``deduction rules'' 1.--3.\ (replacing $\mathord{\forcelong}$ with 
$\mathord{\forcelong_i}$ in the
antecedents and with $\mathord{\forcelong_{i+1}}$ in the consequents).
This yields that $\mathord{\forcelong}$ can be constructed in exponential time
w.r.t.\ the size of the given pvPDA
$\Delta$.

For an exponential algorithm deciding if 
$p\alpha\sim q\beta$ for given configurations 
$p\alpha$, $q\beta$, we use an inductive construction of the sets
\begin{center}
$\calA(\alpha,\beta)=\{\,(p,q)\mid p\alpha\not\sim q\beta\,\}$. 
\end{center}
The construction is based on the following simple observations:
\begin{itemize}
\item		
$\calA(\varepsilon,\varepsilon)=\emptyset$;
\item
$\calA(\varepsilon,Y\beta')=\{\,(p,q)\mid qY$ enables some action 
$a\in\Sigma\,\}$;
\item
$\calA(X\alpha',Y\beta')=\{\,(p,q)\mid 
(pX,qY)\forcelong \calA(\alpha',\beta')\,\}$.
\end{itemize}
For deciding if $pX_1X_2\cdots X_m\sim qY_1Y_2\cdots Y_n$,
where we assume $m\leq n$ w.l.o.g., we can stepwise construct
the sets $\calA_m=\calA(\varepsilon,Y_{m+1}Y_{m+2}\cdots Y_n)$,
$\calA_{i-1}=\{\,(p',q')\mid (p'X_i,q'Y_i)\forcelong \calA_{i}\,\}$
for $i=m,m{-}1,\ldots,1$, and finally give the answer YES 
if $(p,q)\not\in \calA_0$, and NO if  $(p,q)\in \calA_0$.
\end{proof}

\begin{exa}\label{ex-vPDA}
Consider the pvPDA $\Delta = (\{p,p',q,q'\}, \{X\}, \{c,r\}, \mathord{\btran{}})$ with the following rules:
\[
p X \btran{c} \textstyle\frac12 p X X + \frac12 p' X X, \qquad p' X
\btran{c} p' X X, \qquad p X \btran{r} q, \qquad p' X \btran{r} q',
\qquad q X \btran{r} q.
\]
We have
\[
 (p X, p' X) \force{c} \{(p X X, p' X X)\}, \qquad (p X, p' X)
 \force{r} \{(q,q')\}, \qquad (q X, q' X) \force{r} \emptyset,
\]
and this is not an exhaustive list.
The first deduction rule yields $(p X, p' X) \forcelong \{(q,q')\}$ and  $(q X, q' X) \forcelong \emptyset$.
By the third deduction rule we thus have $(p X, p' X) \forcelong \emptyset$.
Hence $(p,p') \in \calA(X,X)$, and so $p X \not\sim p' X$.
\end{exa}

\subsection{Bisimilarity of pBPA is in
2-EXPTIME}\label{sec-upper-bounds-pBPA}

Bisimilarity of (standard) BPA is known to be in
2-EXPTIME (as claimed in~\cite{Burkart00} and explicitly proved
in~\cite{Jancar12}). Here we show that the same upper bound applies to
pBPA as well.

We first recall that if $\Delta$ is a pBPA then the pPDA $\Delta'$
defined in Section~\ref{sec-pPDA-decidable} is a BPA (since $\Delta$
and $\Delta'$ have the same singleton state sets).  Hence
Theorem~\ref{thm:prob-to-nondet}, with the 2-EXPTIME result for BPA,
immediately yields a triple-exponential upper bound for pBPA.  To argue that
this exponential increase is not necessary, we recall the proof
from~\cite{Jancar12} and show that mild modifications yield a
double-exponential algorithm also for pBPA. We assume the reader has
access to~\cite{Jancar12}, and we thus only recall the parts of the proof relevant
to the generalization rather than repeating the
whole argument in all detail. Nevertheless, we also try to convey the
intuitive ideas from~\cite{Jancar12} to facilitate
understanding. 

\begin{thm}\label{thm-upper-bounds-BPA}
The bisimilarity problem for pBPA is in 2-EXPTIME.
\end{thm}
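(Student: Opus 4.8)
The plan is to follow the strategy already announced in the text: take a pBPA $\Delta$, consider the associated ``big'' BPA $\Delta'$ constructed in Section~\ref{sec-pPDA-decidable} (which is again a BPA, since $\Delta$ and $\Delta'$ share the same singleton control-state set), and then run the double-exponential decision procedure of~\cite{Jancar12} on $\Delta'$, but \emph{without ever writing $\Delta'$ down explicitly}. Since $|\Delta'|$ can be exponential in $|\Delta|$, a black-box invocation gives only a 3-EXPTIME bound; the task is therefore to inspect each ingredient of the algorithm in~\cite{Jancar12} and check that, when its input is the implicitly-represented $\Delta'$, every step can still be carried out in time doubly exponential in $|\Delta|$ rather than in $|\Delta'|$. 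By Lemma~\ref{lem:gen-nondet-constr-correct} (equivalently Theorem~\ref{thm:prob-to-nondet}), correctness of this is immediate: $p\alpha\sim q\beta$ in $\calL_\Delta$ iff $p\alpha\sim q\beta$ in $\calL_{\Delta'}$, so deciding the latter suffices.

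Concretely, I would recall from~\cite{Jancar12} the key notions on which the complexity estimate hinges: the relevant sizes are the number of stack symbols (here $|\Gamma|$ for BPA), a bound on the ``norms'' or on the depth to which the bisimulation game must be unrolled, and the cost of a single step of the underlying bisimulation game (one ``attack/response'' round). The crucial observation, already flagged in the \emph{Remark} after Theorem~\ref{thm:prob-to-nondet} and in the introductory discussion of ``macro-rounds'', is that moving from $\Delta$ to $\Delta'$ replaces one round of the standard game by a block of three rounds, and the exponential blow-up is confined to this block: the intermediate configurations of $\Delta'$ (the stack symbols $\stacksymba{d}$ and $\stacksymba{T}$) appear only transiently at the top of the stack, the actual ``milestone'' configurations reached every three rounds are genuine configurations of $\Delta$, and the number of these is only singly exponential in $|\Delta|$ (as for any BPA). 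Moreover a single three-round block can be simulated in polynomial time with bounded alternation, exactly as in the pvPDA proof via the relation $\mathord{\force{a}}$. So I would argue that every quantity that enters the complexity analysis in~\cite{Jancar12} — the number of heads, the length of the fixed-point iteration, the size of the candidate bisimulation bases, the branching of the game tree — is, for $\Delta'$, at most exponential in $|\Delta|$ (because it is polynomial in $|\Gamma'|$, $|\Sigma'|$, $|Q'|$, which are themselves exponential in $|\Delta|$, never worse), and hence the overall running time, being exponential in those quantities, stays doubly exponential in $|\Delta|$.

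In more detail, I would (i) identify the main data structure in~\cite{Jancar12} — presumably a finite ``bisimulation base'' of pairs of heads together with a way of extending it to all configurations, and a congruence-closure / least-fixed-point procedure that decides whether two given configurations lie in the generated equivalence; (ii) observe that feeding $\Delta'$ into this procedure is legitimate because $\Delta'$ is a bona fide BPA; (iii) check that the base for $\Delta'$ can be indexed by pairs of heads of $\Delta'$, of which there are $\exp(|\Delta|)$ many, and that each base entry carries side information of size $\exp(|\Delta|)$, so the base has size $\exp(\exp(|\Delta|))$ = 2-EXPTIME in $|\Delta|$; (iv) verify that each closure/consistency check performed during the fixed-point computation touches only polynomially many base entries and can decide a single game step on $\Delta'$ in time polynomial in $|\Delta'| = \exp(|\Delta|)$, hence in 2-EXPTIME in $|\Delta|$; and (v) conclude that the number of fixed-point iterations is bounded by the base size, so the total time is 2-EXPTIME in $|\Delta|$. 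Where a step of the algorithm in~\cite{Jancar12} manipulates ``one transition rule'' of the BPA, I would replace it by ``one macro-step'' — the three-round block from Section~\ref{sec-pPDA-decidable} — and note that enumerating the relevant subsets $T\subseteq\support{d}$ and probabilities $\rho=d(T)$ costs only $\exp(|\Delta|)$, which is absorbed.

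The main obstacle is not conceptual but bookkeeping: one must go through the specific parameters of~\cite{Jancar12} and confirm that none of them is exponential in the BPA \emph{and} additionally exponential in some other quantity in a way that would, when composed with the $\Delta\mapsto\Delta'$ blow-up, push the bound to triple exponential. In particular I expect the delicate point to be the bound governing how deep the game/fixed-point must be unrolled (the analogue of an ``approximant index'' $n$ with $\sim_n=\sim$ already holding): in~\cite{Jancar12} this is (at least) exponential in the BPA, and since $\Delta'$ is exponential in $\Delta$, one must make sure this quantity for $\Delta'$ is still only \emph{doubly} exponential in $|\Delta|$, i.e.\ that the dependence on the BPA is at most singly exponential, not doubly exponential. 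Assuming the analysis in~\cite{Jancar12} gives a singly-exponential-in-the-BPA bound there (which is what the 2-EXPTIME claim for BPA rests on), everything composes correctly and we obtain the 2-EXPTIME upper bound for pBPA.
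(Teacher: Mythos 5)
Your overall strategy is the paper's: decide bisimilarity on the implicit BPA $\Delta'$ by re-running the algorithm of~\cite{Jancar12}, treating the three rounds of the standard game in $\calL_{\Delta'}$ as one macro-round simulable in polynomial time with bounded alternation, and checking that every quantity in the complexity analysis stays singly exponential in $|\Delta|$. Two points, however, separate your plan from a proof. First, your guessed picture of~\cite{Jancar12} --- a doubly-exponential ``bisimulation base'' driven by a least-fixed-point computation whose iteration count is bounded by the base size --- is not the shape of that algorithm; it is a Prover--Refuter game run in alternating exponential space (AEXPSPACE $=$ 2-EXPTIME), in which Prover keeps the current pair small via decomposition steps, and the bound that matters is that the finite norms of symbols of $\Gamma$, and hence the exponential constant of Lemma~3.14 of~\cite{Jancar12}, are exponential in $|\Delta|$ and computable in polynomial time (because every completed path of $\calL_{\Delta'}$ from a $\Delta$-configuration splits into segments $X\alpha\tran{a}\stacksymba{d}\alpha\tran{\rho}\stacksymba{T}\alpha\tran{\#}\gamma\alpha$). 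The reason the three-round block is ``absorbed'' is precisely that a polynomial-time bounded-alternation subroutine is free inside an alternating-expspace computation; your argument needs this framing, and the fixed-point accounting you sketch in items (iii)--(v) would have to be redone for the algorithm as it actually is.

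Second, and more substantively, there is a verification your plan does not anticipate: when Lemma~3.14 of~\cite{Jancar12} is applied to a pair of $\Delta$-configurations regarded as states of $\calL_{\Delta'}$, one must show that the resulting decomposition can be chosen to consist again of pairs of states of $\calL_\Delta$, not of the intermediate states $\stacksymba{d}\alpha$, $\stacksymba{T}\alpha$. This does not follow from the statement of that lemma; the paper reopens the proofs of Lemmas~3.11 and~3.14 and, in step (1) of the proof of Lemma~3.11, replaces ``fix a rule $A_j\tran{a}\sigma_j$ such that \dots'' by ``fix a strategy of Attacker (resp.\ Defender) in the three-round bisimulation game from $(A_1,A_2)$ such that every outcome $(\sigma_1,\sigma_2)$ satisfies \dots''. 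Your remark that the milestone configurations reached every three rounds are genuine configurations of $\Delta$ gestures at this but does not establish it; it is the one step in the adaptation that genuinely requires looking inside the proofs of the cited results rather than composing their statements.
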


\begin{proof}
We consider a pBPA $\Delta=(\Gamma,\Sigma,\mathord{\btran{}})$, omitting the
	singleton state set $Q$;
 it generates the
	respective pLTS $\calL_\Delta$.
	By $\Delta'$ we refer to the (standard) BPA arising as in 
Section~\ref{sec-pPDA-decidable},
but we do not assume constructing
it explicitly. We recall that 
	\begin{equation}\label{eq:pBPAandBPA}	
\textnormal{
	$\alpha\sim_n \beta$ in $\calL_\Delta$
	iff $\alpha\sim_{3n} \beta$ in $\calL_{\Delta'}$ 
}
	\end{equation}
		(using
	Lemma~\ref{lem:gen-nondet-constr-correct}).
One important fact is that the relations $\mathord{\sim_n}$ and $\mathord{\sim}$ are
congruences w.r.t.\ concatenation, i.e.: $\alpha\sim_n\alpha'$ and
 $\beta\sim_n\beta'$ imply $\alpha\beta\sim_n\alpha'\beta'$.
This holds in the LTS $\calL_{\Delta'}$	
by Proposition 3.1 in~\cite{Jancar12}, and thus also in 
	the pLTS $\calL_{\Delta}$ (due to~(\ref{eq:pBPAandBPA})),
when assuming that each 
$X\in\Gamma$ enables at least one action (cf.~the Remark after
Proposition 3.1 in~\cite{Jancar12}).
	
To match 
the approach 
in~\cite{Jancar12}, we assume that the states in $\calL_\Delta$ are
not only finite strings from $\Gamma^*$ but also
infinite regular (i.e., ultimately periodic) strings from
$\Gamma^{\omega}$.

In Section 3.2 of~\cite{Jancar12} the main algorithm is described
in the form of a Prover-Refuter game. To give an intuition, we
consider a pair of the form $(A\alpha,B\beta)$ for which Prover claims 
 $A\alpha\sim B\beta$ and Refuter  claims $A\alpha\not\sim B\beta$;
we have $A,B\in \Gamma$ and $\alpha,\beta\in\Gamma^*\cup\Gamma^\omega$.
(The notation in~\cite{Jancar12} uses $\calG=(\calN,\calA,\calR)$ for
BPA.)
By the (Prover-Refuter) game protocol, 
when $\Delta$ is a BPA,
Prover has the option to decide that one round of the
	standard bisimulation game, played by Attacker and
	Defender, will be mimicked: in this case Refuter performs an Attacker's
	move, 
	from the pair-component $A\alpha$ or $B\beta$, and Prover
	responds from the other pair-component by a move with the same
	action. 
	We thus get two moves $A\alpha\tran{a}\gamma_1\alpha$
	and $B\beta\tran{a}\gamma_2\beta$ where Prover claims 
$\gamma_1\alpha\sim \gamma_2\beta$ and Refuter claims that 
$\eqlevel(\gamma_1\alpha,\gamma_2\beta)<\eqlevel(A\alpha,B\beta)$,
where the \emph{equivalence-level} is defined as 
$\eqlevel(\delta_1,\delta_2)=\max\{n\mid \delta_1\sim_n\delta_2\}$.
The play then continues with the pair
$(\gamma_1\alpha,\gamma_2\beta)$.

The described option of one round of the bisimulation game corresponds
to Point (3c) in Section 3.2 of~\cite{Jancar12}.
Now, when $\Delta$ is a pBPA, the respective option is that 
Prover and Refuter mimic one round of the
	``probabilistic game'', i.e., of three rounds of the 
	bisimulation game 
	in $\calL_{\Delta'}$, which also results in a new pair 
	$(\gamma_1\alpha,\gamma_2\beta)$ of states from
	$\calL_\Delta$, as above.
	Such three rounds can be performed in polynomial time
	(w.r.t.\ the size of the pBPA $\Delta$)
	with bounded alternation. Since we discuss an 
	alternating-expspace complexity bound 
	(recalling that AEXPSPACE$=$2-EXPTIME),
	the described change of (3c) causes no problem.

Another option that Prover has for the pair $(A\alpha,B\beta)$ is to
provide several relevant smaller pairs from which Refuter will choose one to
continue. Concretely, Prover can decide to provide 
two or three smaller pairs in the forms captured by the following three 
possibilities:
\begin{enumerate}
\item
	$(\alpha,\gamma\beta)$, $(A\gamma,B)$;
\item
	$(\alpha,\gamma\beta)$, $(\beta,\delta^\omega)$,
	$(A\gamma\delta^\omega,B\delta^\omega)$;
\item
	$(\alpha,(\gamma\delta)^\omega)$, $(\beta,(\delta\gamma)^\omega)$,
	$(A(\gamma\delta)^\omega,B(\delta\gamma)^\omega)$.
\end{enumerate}
(As expected, $\delta^\omega=\delta\delta\delta\cdots$.)
In each of the above cases, $(A\alpha,B\beta)$ belongs to the least
congruence containing the respective (two or three) pairs; this
entails that the
least equivalence-level of the respective pairs cannot be bigger than 
 $\eqlevel(A\alpha,B\beta)$.
The size of finite strings from $\Gamma^*$ and regular strings from 
$\Gamma^\omega$
is based on the notion of norms.
By the \emph{norm} $\norm{A}$, for $A\in\Gamma$, we mean the length of a shortest $u$ such that 
$A\tran{u}\varepsilon$; if there is no such $u$, then $A$ is
\emph{unnormed} and we put  $\norm{A}=\omega$. 
We can imagine that we refer to the BPA $\Delta'$ when computing
$\norm{A}$. But since each (completed) path in $\calL_{\Delta'}$ that starts
from a state in $\calL_\Delta$ 
is composed of segments 
$X\alpha\tran{a}\stacksymba{d}\alpha\tran{\rho}\stacksymba{T}\alpha
\tran{\#}\gamma\alpha$ 
(where $X\btran{a}d$ is a rule of $\Delta$ and
$d(\gamma)>0$),
the
finite norms 
$\norm{A}$, $A\in\Gamma$, are at most exponential, and computable in
polynomial time,
\emph{w.r.t. the size of
the pBPA} $\Delta$
(as easily follows by mimicking the proof of Proposition 3.6.
in~\cite{Jancar12}).
We also put $\norm{\varepsilon}=0$ and
$\norm{A\alpha}=\norm{A}+\norm{\alpha}$, when $\norm{A}<\omega$.
Since $U\alpha\sim U$ when $U$ is unnormed, we only consider the
strings of the form $\alpha$, $\alpha U$, $\alpha(\beta)^\omega$
such that 
$\alpha$ and $\beta$ are normed strings from $\Gamma^*$ 
(i.e., $\norm{\alpha}$ and $\norm{\beta}$ are finite) and $U\in\Gamma$
is unnormed. In these cases we put
$\sizeofstr{\alpha}=\sizeofstr{\alpha U}=\norm{\alpha}$ and
$\alpha(\beta)^\omega=\sizeofstr{\alpha}+\sizeofstr{\beta}$ if 
$\alpha$, $\beta$ constitute the canonical presentation of the regular
infinite string $\alpha(\beta)^\omega$ (as defined in a standard way
in~\cite{Jancar12}).
We also put
$\sizeofstr{\delta_1,\delta_2}=\max\{\sizeofstr{\delta_1},\sizeofstr{\delta_2}\}$.

Lemma~3.14 in~\cite{Jancar12} shows that if
$\sizeofstr{A\alpha,B\beta}$ is bigger than an exponential constant
and $A\alpha\sim B\beta$,
then there is a decomposition of one of the above types 1,2,3 
consisting of two or three pairs of the type $(\delta_1,\delta_2)$
where $\delta_1\sim\delta_2$ and
$\sizeofstr{\delta_1,\delta_2}<\sizeofstr{A\alpha,B\beta}$.
We can apply the mentioned lemma to the BPA $\Delta'$;
since the respective exponential constant is related to maximal finite norms
$\norm{A}$, it is exponential in the size of the pBPA $\Delta$.

Hence exponential space is sufficient to keep the current game configuration so
that Prover has a winning strategy for $(\delta_1,\delta_2)$
iff $\delta_1\sim\delta_2$. 
(Prover wins a possibly infinite play if she is always able to respond
when Refuter is mimicking Attacker in the bisimulation game and
always keeps the game configuration in the determined exponential
bounds, by using a decomposition when a large pair arises.)

In the above conclusion, a subtle point is left implicit.
We apply Lemma~3.14 from~\cite{Jancar12} to a pair 
$(A\alpha,B\beta)$ of states in
$\calL_\Delta$ as to a pair of states in $\calL_{\Delta'}$, and assume
that we can choose the respective decompositions that consist of pairs of states
in $\calL_\Delta$. This cannot be deduced just from the statement of the
lemma, but it suffices to perform a routine check of the proofs of 
Lemma~3.14 and the related Lemma~3.11
from~\cite{Jancar12} to verify that such decompositions indeed exist.
The only point in the proofs that might be not so straightforward
is in (1) of
the proof of Lemma 3.11 in~\cite{Jancar12}. There we read
\begin{quote}
	We fix a rule $A_j\tran{a}\sigma_j$ such that for any rule
	$A_{3-j}\tran{a}\sigma_{3-j}$
	we get
	$\textsc{EqLv}(A_1\gamma,A_2)>\textsc{EqLv}(\sigma_1\gamma,\sigma_2)$
	$\ldots$ Now we fix a rule  $A_{3-j}\tran{a}\sigma_{3-j}$ such that
	$\sigma_1\gamma\mu_i\beta\sim \sigma_2\mu_i\beta$.
\end{quote}
Instead we now say: 
\begin{quote}
We fix a strategy of Attacker in the three-round
bisimulation game from $(A_1,A_2)$ such that the outcome
$(\sigma_1,\sigma_2)$ (for whatever strategy of Defender) will
satisfy $\textsc{EqLv}(A_1\gamma,A_2)>\textsc{EqLv}(\sigma_1\gamma,\sigma_2)$ 
$\ldots$ Now we fix a strategy of Defender, in the three-round
bisimulation game from $(A_1,A_2)$, such that the outcome
$(\sigma_1,\sigma_2)$ will satisfy $\sigma_1\gamma\mu_i\beta\sim \sigma_2\mu_i\beta$. \qedhere
\end{quote}
\end{proof}

\subsection{Bisimilarity
of pOCA is in PSPACE}\label{sec:bispOCAinPSPACE}

The bisimilarity problem for (standard)
one-counter automata is known to be PSPACE-complete.
We now argue that 
the decision algorithm described in~\cite{BGJ14} 
(originating in~\cite{BGJ:Concur10}) 
also shows, in fact,
that the problem for pOCA is in PSPACE as well.
We thus show:

\begin{thm}\label{thm-bisim-pOCA-inPspace}
The bisimilarity problem for pOCA is in PSPACE, even
if we present the instance
 $\Delta=(Q,\{I,Z\},\Sigma,\mathord{\btran{}})$,
$p I^mZ, q I^nZ$ (for which we ask if $p I^mZ\sim q I^nZ$)
by a shorthand using $m,n$ written in binary.
\end{thm}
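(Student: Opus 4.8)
The plan is to show that the PSPACE algorithm of~\cite{BGJ14} for (standard) OCA, when applied to the control-state version $\Deltac'$ of a given pOCA $\Delta$, still runs in polynomial space. Recall from Section~\ref{sec-pPDA-decidable} that if $\Delta$ is a pOCA then $\Deltac'$ is an OCA, and by Theorem~\ref{thm:prob-to-nondet} (via Lemma~\ref{lem:gen-nondet-constr-correct}) we have $p I^mZ\sim q I^nZ$ in $\calL_\Delta$ iff the corresponding configurations are bisimilar in $\calL_{\Deltac'}$. The subtlety is that $\Deltac'$ has an exponentially larger control-state set (because of the symbols $\stacksymba{d}$ and, crucially, the subsets $\stacksymba{T}$ of supports) and an exponentially larger action alphabet (because of the rational weights $\rho\in W$); a naive invocation of the OCA algorithm on $\Deltac'$ would therefore only give EXPSPACE.

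First I would recall the structure of the OCA bisimilarity algorithm from~\cite{BGJ14}: it works with a "belt"-like description of the bisimilarity relation on configurations $p I^k Z$, exploiting that $\sim$ restricted to counter values is eventually periodic with period and threshold bounded polynomially in $|Q|$ and the numerical data, and it guesses and verifies such a description configuration by configuration, using only polynomial space. The key observation to carry over is that every "macro-step" of $\Deltac'$ — namely a segment $qX\gamma \ctran{a}\stacksymba{d}X\gamma \ctran{\rho}\stacksymba{T}X\gamma \ctran{\#} p\alpha\gamma$ — corresponds to a single rule $qX\btran{a}d$ of $\Delta$ together with a choice of a subset $T\subseteq\support{d}$ and an element $p\alpha\in T$. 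Since $\support{d}$ has at most $|Q|\cdot 3$ elements (as $d\in\dist{Q\Gamma^{\le 2}}$), both the subset $T$ and the weight $\rho=d(T)$ are describable in polynomial space, and the three-round macro-step of the bisimulation game on $\calL_{\Deltac'}$ can be simulated by a polynomial-time procedure with bounded alternation over the succinct data of $\Delta$. Thus the intermediate states $\stacksymba{d}$, $\stacksymba{T}$ and the action labels $\rho$ never need to be written down in full: they are only ever used transiently inside one macro-round, and a single macro-round fits in polynomial space.

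Second, I would check that the quantitative parameters governing the periodicity/threshold bounds of the OCA algorithm stay singly exponential (hence their logarithms stay polynomial) in the size of $\Delta$ when instantiated on $\Deltac'$. The number of "macro control-states" that matter is $|Q|$ (not the exponential $|Q'|$), because by the Remark in Section~\ref{sec-pPDA-decidable} the configurations of $\calL_{\Deltac'}$ that are reachable from configurations of $\calL_\Delta$ return to a genuine pOCA configuration $p I^k Z$ every three rounds, and in between the first stack symbol is irrelevant; so the bisimilarity relation we actually need to describe is a relation on pairs $(pI^mZ,\, qI^nZ)$ with $p,q\in Q$, exactly as for a plain OCA over the control set $Q$. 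Consequently the belt description has the same polynomial complexity as in~\cite{BGJ14}; only the local "one step" oracle is replaced by the macro-step oracle above, which is still in $\mathrm{P}$ with bounded alternation, and therefore does not leave PSPACE. Finally, the binary presentation of $m,n$ is harmless: the algorithm of~\cite{BGJ14} already handles counter values via their binary encodings and arithmetic modulo the (singly exponential) period, all of which is polynomial-space.

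The main obstacle I expect is the second point: verifying that feeding the exponentially-blown-up OCA $\Deltac'$ into the~\cite{BGJ14} machinery does not secretly blow up the periodicity constants or the width of the belts. This requires a careful reading of how those constants depend on $|Q|$, on the stack alphabet, and on the magnitudes of the transition data in~\cite{BGJ14}, and an argument that the exponential increase in $|Q'|$ and $|\Sigma'|$ enters only through the local step relation (which we simulate on the succinct $\Delta$) and not through the global fixed-point bound — equivalently, that the "three-round" grouping lets us treat $\Delta$ as a concise representation of $\Deltac'$ in precisely the sense sketched in the Remark at the end of Section~\ref{sec-pPDA-decidable}. The rest — bounded-alternation simulation of one macro-round, and binary arithmetic on counter values — is routine.
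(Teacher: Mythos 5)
Your proposal is correct and follows essentially the same route as the paper: apply the \cite{BGJ14} algorithm to the control-state version $\Deltac'$ while using $\Delta$ as its succinct presentation, restrict the colouring/belt description to pairs over $Q\times Q$, replace single steps by three-round macro-steps simulated in polynomial time with bounded alternation, and check that the belt coefficients and periodicity constants depend only polynomially on $|Q|$. The paper fills in the detail you flag as the main obstacle by working with the underlying finite LTS (where ${\sim}={\sim_{3(k-1)}}$ for $k=|Q|$), restricting the set $\INC$ to configurations over $Q$, and computing $\distINC$ via the same macro-steps, but the overall argument is the one you outline.
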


\begin{proof}
Let us consider a pOCA $\Delta=(Q,\{I,Z\},\Sigma,\mathord{\btran{}})$.
	Let $\Deltac'=(Q',\{I,Z\},\Sigma',\mathord{\ctran{}})$ be the
	OCA defined 
in Section~\ref{sec-pPDA-decidable}
	(i.e., the 
	``control-state
	version'' of the respective PDA); 	
		hence 
	we have
	\begin{center}
	$p I^m Z\sim_n q I^{m'} Z$ in $\calL_\Delta$
		iff $p I^m Z\sim_{3n} q I^{m'} Z$ in
		$\calL_{\Deltac'}$. 
	\end{center}
We now aim to apply the algorithm 
	from~\cite{BGJ14} to $\Deltac'$ but using only (``small'') $\Delta$
	as a presentation of (``big'') $\Deltac'$.
If the algorithm was applied to $\Deltac'$ explicitly, it would
	construct a semilinear description of 
	the 
	mapping 
$\chi\colon\N\times\N\times (Q'\times Q')\rightarrow \{1,0\}$
such that 		
	\begin{center}
	$\chi(m,n,(p,q))=1$ iff $pI^mZ\sim qI^nZ$.
	\end{center}
But the set $Q'$ can be exponentially larger than the size of $\Delta$.
The idea is that 
	we simply let the algorithm compute 
just the restriction of $\chi$ to the domain
$\N\times\N\times (Q\times Q)$; it will turn out that 
a description of this restricted mapping can be computed
in polynomial space w.r.t.\ the size of $\Delta$.

Generally, for any relation  $R$ on $Q\times (\{I\}^*Z)$,
by the  \emph{(characteristic) colouring} $\chi_R$ we mean
the mapping $\chi_R\colon\N\times\N\times (Q\times Q)\rightarrow \{1,0\}$
where
	\begin{center}
$\chi_R(m,n,(p,q))=1$ iff $(pI^mZ,qI^nZ)\in R$.
	\end{center}
	In Fig.~\ref{fig:belts} we can see a depiction of the domain
	$\N\times\N\times (Q\times Q)$, assuming
	$Q=\{q_1,q_2,\dots,q_k\}$.
For a relation  $R$ on $Q\times (\{I\}^*Z)$,
the mapping $\chi_R$ can be viewed as a ``black-white colouring'',
	making a point $(m,n,(q_i,q_j))$ black if 
	$\chi_R(m,n,(q_i,q_j))=1$ and white
	if $\chi_R(m,n,(q_i,q_j))=0$.
	(See also the figures in~\cite{BGJ14}.)

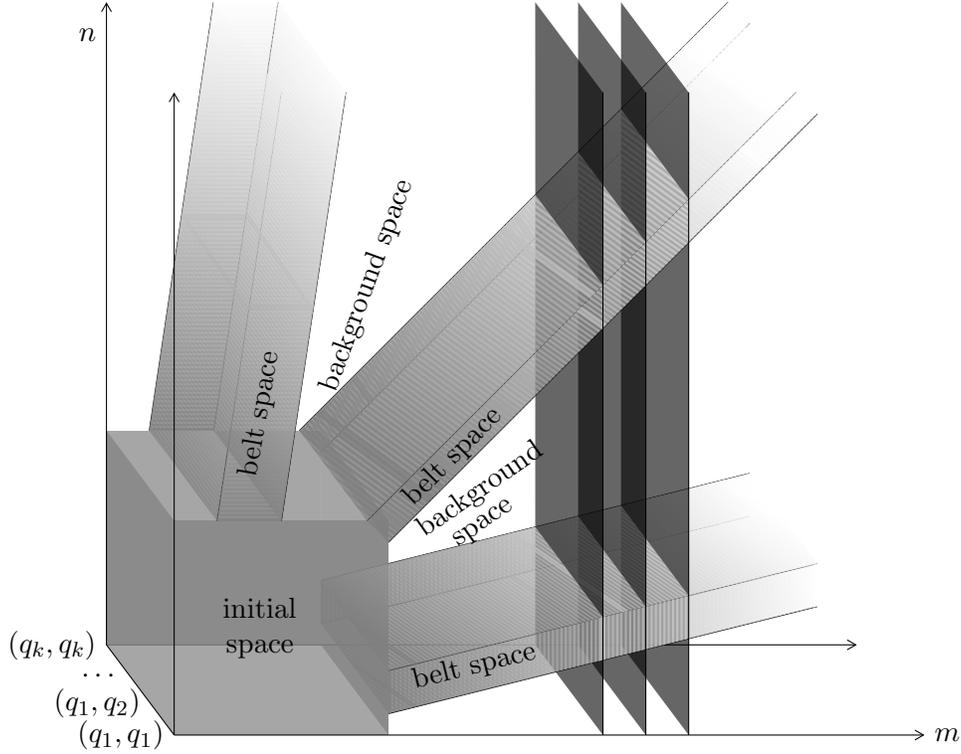
\begin{figure}
\centering
\begin{tikzpicture}[x = {(1.9cm,0cm)}, y = {(0cm,1.9cm)}, z = {(-0.6cm,0.8cm)},xscale=1.5,yscale=1.5]
 \path [name path=w1](2.0,0,0)--(2.0,3,0);
 \path [name path=w2](2.0,0,1)--(2.0,3,1);
 \path [name path=ww1](2.2,0,0)--(2.2,3,0);
 \path [name path=ww2](2.2,0,1)--(2.2,3,1);
 \path [name path=www1](2.4,0,0)--(2.4,3,0);
 \path [name path=www2](2.4,0,1)--(2.4,3,1);
 \path [name path=r1](1,0.3,0) -- (3,0.8,0);
 \path [name path=r2](1,0.3,1) -- (3,0.8,1);
 \path [name path=u1](0.9,1,0) -- (2.9,3,0);
 \path [name path=u2](0.9,1,1) -- (2.9,3,1);
 \path [name intersections={of= w1 and r1, name=A}];
 \path [name intersections={of= w2 and r2, name=B}];
 \path [name intersections={of= w1 and u1, name=C}];
 \path [name intersections={of= w2 and u2, name=D}];
 \path [name intersections={of= ww1 and r1, name=Aw}];
 \path [name intersections={of= ww2 and r2, name=Bw}];
 \path [name intersections={of= ww1 and u1, name=Cw}];
 \path [name intersections={of= ww2 and u2, name=Dw}];
 \path [name intersections={of= www1 and r1, name=Aww}];
 \path [name intersections={of= www2 and r2, name=Bww}];
 \path [name intersections={of= www1 and u1, name=Cww}];
 \path [name intersections={of= www2 and u2, name=Dww}];

 \draw[->] (0,0,1) -- (3.5,0,1);

 \draw[path fading=north] (1,0.1,1) -- (3,0.6,1);
 \draw[path fading=north] (1,0.3,1) -- (3,0.8,1);
 \shade[opacity=0.6, shading angle=90] (1,0.1,1) -- (1,0.3,1) -- (3,0.8,1) -- (3,0.6,1) -- cycle;
 \shade[opacity=0.6, shading angle=120] (1,0.1,0) -- (3,0.6,0) -- (3,0.6,1) -- (1,0.1,1) -- cycle;

 \draw[path fading=east] (0.9,1,1) -- (2.9,3,1);
 \draw[path fading=east] (1,0.9,1) -- (3,2.9,1);
 \shade[opacity=0.6, shading angle=125] (0.9,1,1) -- (1,0.9,1) -- (3,2.9,1) -- (2.9,3,1) -- cycle;
 \shade[opacity=0.6, shading angle=145] (1,0.9,0) -- (3,2.9,0) -- (3,2.9,1) -- (1,0.9,1) -- cycle;

 \path[fill=gray,opacity=0.7] (0,1,0) rectangle (1,0,0);
 \path[fill=gray,opacity=0.7] (0,1,1) rectangle (1,0,1);
 \path[fill=gray,opacity=0.7] (1,1,1) -- (1,1,0) -- (1,0,0) -- (1,0,1) -- cycle;
 \path[fill=gray,opacity=0.7] (0,1,1) -- (0,1,0) -- (0,0,0) -- (0,0,1) -- cycle;
 \node at (0.4,0.5,0) {\parbox{1.5cm}{\centering initial \\ space}};

 \draw[path fading=north] (0.2,1,0) -- (0.5,3,0);
 \draw[path fading=north] (0.2,1,1) -- (0.5,3,1);
 \draw[path fading=north] (0.5,1,0) -- (0.8,3,0);
 \draw[path fading=north] (0.5,1,1) -- (0.8,3,1);
 \shade[opacity=0.6, shading angle=160] (0.2,1,0) -- (0.5,3,0) -- (0.5,3,1) -- (0.2,1,1) -- cycle;
 \shade[opacity=0.6, shading angle=180] (0.2,1,0) -- (0.5,1,0) -- (0.8,3,0) -- (0.5,3,0) -- cycle;
 \shade[opacity=0.6, shading angle=160] (0.5,1,0) -- (0.8,3,0) -- (0.8,3,1) -- (0.5,1,1) -- cycle;
 \shade[opacity=0.6, shading angle=180] (0.2,1,1) -- (0.5,1,1) -- (0.8,3,1) -- (0.5,3,1) -- cycle;
 \node[rotate=80] at (0.4,1.5,0) {belt space};

 \path[opacity=0.6,fill] (2.4,0,0) -- (Aww-1) -- (Bww-1) -- (2.4,0,1) -- cycle;
 \path[opacity=0.6,fill] (2.2,0,0) -- (Aw-1) -- (Bw-1) -- (2.2,0,1) -- cycle;
 \path[opacity=0.6,fill] (2,0,0) -- (A-1) -- (B-1) -- (2,0,1) -- cycle;

 \draw[path fading=north] (1,0.1,0) -- (3,0.6,0);
 \draw[path fading=north] (1,0.3,0) -- (3,0.8,0);
 \shade[opacity=0.6, shading angle=120] (1,0.3,0) -- (3,0.8,0) -- (3,0.8,1) -- (1,0.3,1) -- cycle;
 \path[opacity=0.6,fill] (Aww-1) -- (Cww-1) -- (Dww-1) -- (Bww-1) -- cycle;
 \path[opacity=0.6,fill] (Aw-1) -- (Cw-1) -- (Dw-1) -- (Bw-1) -- cycle;
 \path[opacity=0.6,fill] (A-1) -- (C-1) -- (D-1) -- (B-1) -- cycle;
 \shade[opacity=0.6, shading angle=90] (1,0.1,0) -- (1,0.3,0) -- (3,0.8,0) -- (3,0.6,0) -- cycle;
 \node[rotate=14] at (1.4,0.32,0) {belt space};

 \draw[path fading=east] (0.9,1,0) -- (2.9,3,0);
 \draw[path fading=east] (1,0.9,0) -- (3,2.9,0);
 \shade[opacity=0.6, shading angle=145] (0.9,1,0) -- (2.9,3,0) -- (2.9,3,1) -- (0.9,1,1) -- cycle;
 \path[opacity=0.6,fill] (2.4,3,0) -- (Cww-1) -- (Dww-1) -- (2.4,3,1) -- cycle;
 \path[opacity=0.6,fill] (2.2,3,0) -- (Cw-1) -- (Dw-1) -- (2.2,3,1) -- cycle;
 \path[opacity=0.6,fill] (2,3,0) -- (C-1) -- (D-1) -- (2,3,1) -- cycle;
 \shade[opacity=0.6, shading angle=145] (0.9,1,0) -- (1,0.9,0) -- (3,2.9,0) -- (2.9,3,0) -- cycle;
 \node[rotate=45] at (1.3,1.3,0) {belt space};

 \draw[->] (0,0,1) -- (0,3,1) node[pos=0.95,left] {$n$};
 \draw[->] (0,0,0) -- (3.5,0,0) node[pos=1,right] {$m$};
 \draw[->] (0,0,0) -- (0,3,0);
 \draw (0,0,0) -- (0,0,1) node[pos=0.0,left] {$(q_1,q_1)$}
  node[pos=0.35,left] {$(q_1,q_2)$}
  node[pos=0.65,left] {$\cdots$}
  node[pos=1,left] {$(q_k,q_k)$};
 \node[rotate=70] at (0.9,2.1,0) {background space};
 \node[rotate=35] at (1.4,1.05,0) {\parbox{1.5cm}{\centering background\\[-1mm] space}};
 \draw (2,0,0) -- (2,3,0);
 \draw (2.2,0,0) -- (2.2,3,0);
 \draw (2.4,0,0) -- (2.4,3,0);
\end{tikzpicture}
\caption{OCA: partition of a grid, and a moving vertical
window~of~width~3}\label{fig:belts}
\end{figure}

An important ingredient in~\cite{BGJ14} is the underlying finite LTS;
here it underlies $\Deltac'$, and would be denoted 
by $\calF_{\Deltac'}$ according to Section 3.2 in~\cite{BGJ14}.
In our setting, we can describe $\calF_{\Deltac'}$ as follows.
We first consider the pLTS 
$\calL=(Q,\Sigma,\mathord{\tran{}})$ with the following relation
	$\mathord{\tran{}}$:
each rule $pI\btran{a}d$ of $\Delta$ gives rise to the 
transition $p\tran{a}d'$ 
where $d'(q)=d(q\varepsilon)+d(qI)+d(qII)$ (for all
$q\in Q$).	
Hence $\calL$ behaves as $\Delta$ with the ``always-positive''
counter. In other words, we can consider 
extending the pLTS $\calL_\Delta$ with the states $qI^\omega$ (for all $q\in
Q$);
then $\calL$ is just the restriction to the set of these additional states
(this set is closed under reachability).
Now the mentioned $\calF_{\Deltac'}$ is, in fact, the LTS 
$\calL'$ corresponding to $\calL$ as 
defined in Section~\ref{sec-prob-to-nondet}.

The bisimulation equivalence on the finite pLTS $\calL$ can be constructed in
polynomial time by standard partition refinement
technique~\cite{Baier96,DBLP:journals/jcss/BaierEM00}.
In particular, if $|Q|=k$, then ${\sim}={\sim_{k-1}}$ on $\calL$. 
Though $\calL'$ (i.e., $\calF_{\Deltac'}$) 
is exponentially bigger than $\calL$, 
due to its special form we have that 
\begin{center}
	${\sim}={\sim_{3(k-1)}}$ in $\calL'$.
\end{center}
Now the set $\INC$ from~\cite{BGJ14}
of configurations ``incompatible'' with $\calL'$
can be restricted to $Q$, and we thus put 
\begin{center}
	$\INC=\{\,\,pI^mZ\mid p\in Q, \forall q\in Q:pI^mZ\not\sim_{3k} q$
	when $pI^mZ$ is from $\calL_{\Deltac'}$ and $q$ from $\calL'\,\}$.
\end{center}
(We recall that comparing the states from different LTSs implicitly
refers to the disjoint union of these LTSs.)

As in~\cite{BGJ14}, 
we define $\distINC(pI^mZ)$ as the length of a shortest word
$u$ such that $pI^mZ\tran{u}qI^nZ\in\INC$ in the LTS $\calL_{\Deltac'}$;
we put $\distINC(pI^mZ)=\omega$ if there is no such word $u$.

The analysis
in~\cite{BGJ14} applied to $\Deltac'$, $\calL_{\Deltac'}$, $\calL'$
gives us the following:
\begin{enumerate}
	\item		
If $m\geq 3k$, then
		$pI^mZ\sim_{3k}p$ where
$p\in Q$, $pI^mZ$ is viewed as a state of $\calL_{\Deltac'}$,
		and $p$ is viewed as a state of $\calL'$.
		Hence $pI^mZ\in\INC$ implies $m<3k$. 
(In fact, $3k$ can be replaced by $k$ here, 
		since for any segment
		$pI^mZ\tran{a}\stacksymba{d}I^mZ\tran{\rho}
		\stacksymba{T}I^mZ\tran{\#}qI^{m'}Z$ 	
		we have $m'\geq m{-}1$,
		but this is not
		important.)
	\item
If $\distINC(pI^mZ)=\omega$, then  
$pI^mZ\sim q$ for some $q$ (even if $m<3k$), and, moreover,
$pI^mZ\sim q$ iff $pI^mZ\sim_{3k} q$.
\item
If $\distINC(pI^mZ)\neq \distINC(qI^nZ)$, then 
$pI^mZ\not\sim qI^nZ$.
\end{enumerate}
We note that $\INC$ can be computed in polynomial time 
w.r.t.\ the size of $\Delta$ (though even polynomial space would
suffice for us): by Point 1 we have $\INC\subseteq\{\,pI^mZ\mid p\in Q,
m<3k\,\}$, and deciding if $pI^mZ\sim_{3k} q$ can be easily done in
polynomial time.

When computing $\distINC(pI^mZ)$, we can consider only
easily computable ``macrosteps'' 
$pI^mZ\tran{}qI^{m'}Z$, with $p,q\in Q$ and $m'\in\{m{-}1,m,m{+}1\}$, where 
there are $a$,$\stacksymba{d}$,$\rho$,$\stacksymba{T}$ such that
$pI^mZ\tran{a}\stacksymba{d}I^mZ\tran{\rho}\stacksymba{T}I^mZ
\tran{\#}qI^{m'}Z$ in $\calL_{\Deltac'}$.

Hence the analysis in~\cite{BGJ14} shows that 
the points $(m,n,(p,q))$ such that $\distINC(pI^mZ)=
\distINC(qI^nZ)<\omega$ lie in ``linear belts'' 
(see Proposition $26$ in~\cite{BGJ14}), i.e., in the ``belt space''
and the ``initial space''
depicted in Figure~\ref{fig:belts}. Moreover, the belts have polynomial
coefficients in the size of $\Delta$:
though formally we refer to
$\Delta'$ with $|Q'|$ states, the coefficients are polynomial in the
number $|Q|$ due to the above mentioned small number of macrosteps. 

The polynomial space algorithm in~\cite{BGJ14} is ``moving the
vertical window of width $3$'' from the beginning to the right (as is
also depicted in Figure~\ref{fig:belts} and in the figures in~\cite{BGJ14}). 
In principle it can guess the black points in the initial space and
the belt space, while the colour of the points in the background space 
(where we either have
$\distINC(pI^mZ)\neq \distINC(qI^nZ)$
or $\distINC(pI^mZ)=\distINC(qI^nZ)=\omega$) can be easily computed.

The algorithm now proceeds in the same way as
\textsc{Alg-Bisim} in Section $4$ of~\cite{BGJ14};
it guesses a black-white coloring that should represent a
bisimulation 
but, as already mentioned, it guesses just 
the restriction to the domain
$\N\times\N\times(Q\times Q)$. 
In Point (c)ii of \textsc{Alg-Bisim}
it is checked if the guess in the intersection of the 
middle of the vertical window with the initial space and belt space
is consistent w.r.t.\ the black-white colouring of its (closest) neighbourhood.

In our case,
in such a black point $(m,n,(p,q))$ we simply run a polynomial-time algorithm
with bounded alternation, mimicking three rounds of the bisimulation
game from $(pI^mZ,qI^nZ)$, to check if Defender can guarantee that the
outcome is again a black point.
\end{proof}

\section{Lower Bounds}\label{sec-lower-bounds}

The upper bounds for pOCA 
and for
pvPDA are tight:
the bisimilarity problems already for standard versions are known to
be PSPACE-hard for 
OCA (even for visibly OCA~\cite{SrbaVisiblyPDA:2009})
and EXPTIME-hard for vPDA~(see~\cite{SrbaVisiblyPDA:2009} where 
a relevant construction from~\cite{KuceraM02} is used);
hence in combination with Theorems~\ref{thm-bisim-pOCA-inPspace} 
and~\ref{thm-upper-bounds-visibly} we obtain:

\begin{cor}
The bisimilarity problem for pOCA is PSPACE-complete, and 
the bisimilarity problem for pvPDA EXPTIME-complete.
\end{cor}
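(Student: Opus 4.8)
The plan is to obtain the two completeness statements by pairing the upper bounds already established with the classical lower bounds for the non-probabilistic subclasses, exploiting the fact that standard OCA (resp.\ vPDA) are precisely the special case of pOCA (resp.\ pvPDA) in which every rule carries a Dirac distribution. First, for membership there is nothing to do beyond citing what we have proven: Theorem~\ref{thm-bisim-pOCA-inPspace} places bisimilarity for pOCA in PSPACE, and Theorem~\ref{thm-upper-bounds-visibly} places bisimilarity for pvPDA in EXPTIME.

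Second, for hardness I would record the trivial but essential observation that when all rules of a pPDA are Dirac, the generated pLTS is a standard LTS, and bisimilarity in a pLTS restricted to such an LTS coincides with ordinary LTS bisimilarity: two Dirac distributions are $R$-equivalent exactly when the (unique) element of one support is $R$-related to that of the other. Hence every instance of the standard OCA (resp.\ vPDA) bisimilarity problem is, verbatim, an instance of the pOCA (resp.\ pvPDA) bisimilarity problem with the same answer. It then suffices to invoke the known results: bisimilarity of standard OCA is PSPACE-hard, in fact already for visibly OCA~\cite{SrbaVisiblyPDA:2009}, and bisimilarity of standard vPDA is EXPTIME-hard~\cite{SrbaVisiblyPDA:2009} (which adapts a construction of~\cite{KuceraM02}). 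Combining the matching bounds gives PSPACE-completeness for pOCA and EXPTIME-completeness for pvPDA.

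There is essentially no obstacle here: the corollary is a bookkeeping combination of earlier results. The only point meriting a line of care is the semantic compatibility mentioned above, namely that probabilistic bisimilarity specialises correctly to standard bisimilarity on Dirac-only systems; this is immediate from the definitions in Section~\ref{sec:prelim} (a standard PDA generates a standard LTS, and the condition ``$d,d'$ are $R$-equivalent'' collapses to ``the supports are $R$-related'' when $d,d'$ are Dirac). I would note in passing that the stronger assertion — that the hardness persists even for \emph{fully} probabilistic OCA and vPDA, which do \emph{not} form a syntactic superclass of the standard models — is not needed for this corollary and is instead the subject of the adapted lower-bound constructions presented in the remainder of this section.
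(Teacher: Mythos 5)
Your proposal is correct and matches the paper's own argument: the corollary is obtained by combining the upper bounds of Theorems~\ref{thm-bisim-pOCA-inPspace} and~\ref{thm-upper-bounds-visibly} with the known PSPACE- and EXPTIME-hardness of bisimilarity for standard OCA and vPDA, which are special cases of pOCA and pvPDA. Your explicit remark that probabilistic bisimilarity specialises to standard bisimilarity on Dirac-only systems is left implicit in the paper but is exactly the right point to make.
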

In Sections~\ref{sec:pOCA-hard} and~\ref{sec:pvPDA-hard}
we show that these lower bounds also apply to the fully
probabilistic versions, even when the action alphabet is restricted to size 1 and 3, respectively.
We define two gadgets, adapted from~\cite{CBW12}, that will be
used for both results.  The gadgets are small pLTSs that allow us to simulate AND and
OR gates using probabilistic bisimilarity. We depict the gadgets in
Figure~\ref{fig:gadgets}, where we assume that
all edges have probability $\frac12$ and have the same label. The gadgets satisfy
the following (trivially verifiable) propositions, in which we write
$s \tran{a}t_1\mid t_2$ as a shorthand for
$s\tran{a} \frac12 t_1 + \frac12 t_2$.

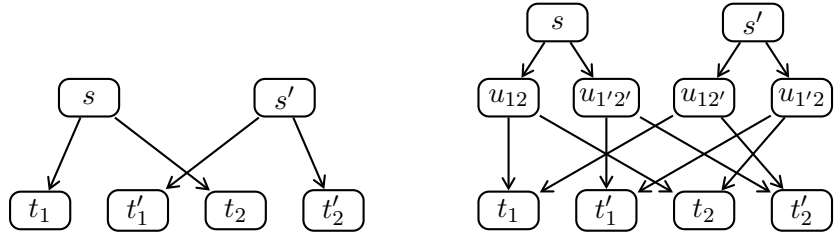
\begin{figure}
\begin{center}
\begin{tabular}{c@{\hspace{15mm}}c}
\begin{tikzpicture}[xscale=1.3,tran]
\tikzstyle{every node} = [ran,inner sep=2pt,minimum height=5mm, minimum width=8mm];
\node (aa) at (0.5,1.5)  {$s$};
\node (bb) at (2.5,1.5)  {$s'$};
\node (cc) at (0,0)  {$t_1$};
\node (dd) at (1,0)  {$t_1'$};
\node (ee) at (2,0)  {$t_2$};
\node (ff) at (3,0)  {$t_2'$};
\draw[->] (aa)--(cc);
\draw[->] (bb)--(dd);
\draw[->] (aa)--(ee);
\draw[->] (bb)--(ff);
\end{tikzpicture}
&
\begin{tikzpicture}[xscale=1.3,tran]
\tikzstyle{every node} = [ran,inner sep=2pt,minimum height=5mm, minimum width=8mm];
\node (a)  at (0.5,3)  {$s$};
\node (b) at (2.5,3)  {$s'$};
\node (c) at  (0,2)  {$u_{12}$};
\node (d) at  (1,2)  {$u_{1'2'}$};
\node (e) at  (2,2)  {$u_{12'}$};
\node (f) at  (3,2)  {$u_{1'2}$};
\node (g) at  (0,0.5)  {$t_1$};
\node (h) at  (1,0.5)  {$t_1'$};
\node (i) at  (2,0.5)  {$t_2$};
\node (j) at  (3,0.5)  {$t_2'$};
\draw[->]  (a) -- (c);
\draw[->]  (a) -- (d);
\draw[->]  (b) -- (e);
\draw[->]  (b) -- (f);
\draw[->]  (c) -- (g);
\draw[->]  (d) -- (h);
\draw[->]  (e) -- (j);
\draw[->]  (f) -- (i);
\draw[->]  (e) -- (g);
\draw[->]  (f) -- (h);
\draw[->]  (c) -- (i);
\draw[->]  (d) -- (j);
\end{tikzpicture}
\end{tabular}
\end{center}
\caption{AND-gadget (left) and OR-gadget (right)}\label{fig:gadgets}
\end{figure}

\begin{prop}\label{prop:ANDgadget}
\textnormal{(AND-gadget)}
Suppose $s,s'$,
$t_1,t'_1$, $t_2,t'_2$ are states in a pLTS such that
 $t_1\not\sim t'_2$ and the only transitions
outgoing from $s,s'$ are
$s\tran{a}t_1\mid t_2$
and
$s'\tran{a}t'_1\mid t'_2$\,.
Then
$s\sim s'$ if and only if $t_1\sim t'_1$ $\land$
$t_2\sim t'_2$.
\end{prop}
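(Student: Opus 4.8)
The plan is to prove the two implications separately, in both directions using that $s$ and $s'$ each have a single outgoing transition and that these carry the same label~$a$; write $d = \tfrac12 t_1 + \tfrac12 t_2$ and $d' = \tfrac12 t'_1 + \tfrac12 t'_2$ for the two target distributions. By the definition of bisimilarity, $s \sim s'$ holds iff $d$ and $d'$ are $\sim$-equivalent, and by the inductive characterisation $s \sim_{n+1} s'$ holds iff $d$ and $d'$ are $\sim_n$-equivalent. So in each direction the task is only to pass between equivalence of the distributions $d,d'$ and the relatedness of the pairs $t_1,t'_1$ and $t_2,t'_2$.

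For the ``if'' direction, assume $t_1 \sim t'_1$ and $t_2 \sim t'_2$, and prove $s \sim_n s'$ for all $n$ by induction on $n$; the base case $n=0$ is trivial, and $s \sim_n s'$ for all $n$ gives $s \sim s'$. For the step, from ${\sim} \subseteq {\sim_n}$ we get $t_1 \sim_n t'_1$ and $t_2 \sim_n t'_2$, hence $t_1 \in E \Leftrightarrow t'_1 \in E$ and $t_2 \in E \Leftrightarrow t'_2 \in E$ for every $\sim_n$-class $E$; therefore $d(E) = d'(E)$ for all such $E$, i.e. $d$ and $d'$ are $\sim_n$-equivalent, and since $s \tran{a} d$ and $s' \tran{a} d'$ are the only transitions available on either side, $s \sim_{n+1} s'$ follows. (Alternatively one can check that the equivalence closure of ${\sim} \cup \{(s,s')\}$ is a bisimulation, but the inductive route avoids manipulating equivalence relations.)

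For the ``only if'' direction, assume $s \sim s'$. Then the move $s \tran{a} d$ must be answered by the unique move $s' \tran{a} d'$, so $d$ and $d'$ are $\sim$-equivalent. The hypothesis $t_1 \not\sim t'_2$ is precisely what rules out the ``crossed'' matching $t_1 \sim t'_2$, $t_2 \sim t'_1$. Concretely, I would first evaluate $d$ and $d'$ on the $\sim$-class $E$ of~$t_1$: since $t'_2 \notin E$ we have $d'(E) \le \tfrac12$, whereas $d(E) \ge \tfrac12$ because $t_1 \in E$; equality of $d(E)$ and $d'(E)$ then forces both to equal $\tfrac12$, which means $t'_1 \in E$ (so $t_1 \sim t'_1$) and $t_2 \notin E$ (so $t_1 \not\sim t_2$). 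Evaluating next on the $\sim$-class $E'$ of~$t_2$, and using $t'_1 \sim t_1 \not\sim t_2$ to conclude $t'_1 \notin E'$, we obtain $\tfrac12 = d(E') = d'(E')$, which can only hold if $t'_2 \in E'$, i.e. $t_2 \sim t'_2$.

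The arithmetic is elementary throughout; the one step needing a little care is this final bookkeeping of which of $t'_1,t'_2$ may lie in the two distinguished $\sim$-classes, and that is exactly where the single hypothesis $t_1 \not\sim t'_2$ is used (the second time via $t'_1 \sim t_1$).
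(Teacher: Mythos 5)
Your proof is correct: both directions reduce to comparing the two target distributions on $\sim$-classes (resp.\ $\sim_n$-classes), and your case analysis in the ``only if'' direction uses the hypothesis $t_1\not\sim t'_2$ exactly where it is needed to exclude the crossed matching. The paper offers no proof of its own --- it declares the gadget propositions ``trivially verifiable'' --- and your argument is precisely the routine verification the authors intend.
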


\begin{prop}\label{prop:ORgadget}
\textnormal{(OR-gadget)}
Suppose $s,s'$,
$t_1,t'_1$, $t_2,t'_2$,
and $u_{12}$, $u_{1'2}$, $u_{12'}$, $u_{1'2'}$
are states in a pLTS.
Let the only transitions
outgoing from
$s,s',u_{12}, u_{1'2},u_{12'},u_{1'2'}$
be
\begin{gather*}
s\tran{a}u_{12}\mid u_{1'2'} \qquad
s'\tran{a}u_{12'}\mid u_{1'2}
\\
u_{12}\tran{a}t_1\mid t_2 \qquad
u_{1'2'}\tran{a}t'_1\mid t'_2 \qquad
u_{12'}\tran{a}t_1\mid t'_2 \qquad
u_{1'2}\tran{a}t'_1\mid t_2\,.
\end{gather*}
Then
$s\sim s'$ if and only if $t_1\sim t'_1$ $\lor$
$t_2\sim t'_2$.
\end{prop}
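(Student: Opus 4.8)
The plan is to convert the question ``$s\sim s'$?'' into a purely combinatorial statement about how $t_1,t'_1,t_2,t'_2$ are distributed among the $\sim$-classes, and then to read off the claim. The key structural observation is that each of the states $s,s',u_{12},u_{1'2},u_{12'},u_{1'2'}$ has exactly one outgoing transition, all carrying the same label $a$. Consequently, for any two of these states $v$ and $w$ one has $v\sim w$ if and only if the unique target distribution of $v$ is $\sim$-equivalent to that of $w$, i.e.\ the two distributions assign equal mass to every $\sim$-class. So everything reduces to understanding $\sim$-equivalence of distributions of the shape $\frac12 x+\frac12 y$.

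First I would isolate one auxiliary fact: for any states $x,y,x',y'$, the distributions $\frac12 x+\frac12 y$ and $\frac12 x'+\frac12 y'$ are $\sim$-equivalent if and only if the multisets of $\sim$-classes $\{[x],[y]\}$ and $\{[x'],[y']\}$ coincide, equivalently if and only if $(x\sim x'\text{ and }y\sim y')$ or $(x\sim y'\text{ and }y\sim x')$. This is a short case distinction according to whether $x\sim y$ (in which case $\frac12 x+\frac12 y$ degenerates to the Dirac distribution on a single $\sim$-class), using only that $\sim$ is an equivalence. Specialising to a shared point, $y=y'$, the condition collapses to $x\sim x'$.

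Then it is just a computation. Applying the shared-point version to the middle layer: $u_{12}$ and $u_{1'2}$ have targets $\frac12 t_1+\frac12 t_2$ and $\frac12 t'_1+\frac12 t_2$, hence $u_{12}\sim u_{1'2}$ iff $t_1\sim t'_1$; likewise $u_{1'2'}\sim u_{12'}$ iff $t_1\sim t'_1$ (shared point $t'_2$), while $u_{12}\sim u_{12'}$ iff $t_2\sim t'_2$ (shared point $t_1$) and $u_{1'2'}\sim u_{1'2}$ iff $t_2\sim t'_2$ (shared point $t'_1$). Finally, $s$ and $s'$ have targets $\frac12 u_{12}+\frac12 u_{1'2'}$ and $\frac12 u_{12'}+\frac12 u_{1'2}$, which share no point, so the general version gives: $s\sim s'$ iff $(u_{12}\sim u_{12'}\text{ and }u_{1'2'}\sim u_{1'2})$ or $(u_{12}\sim u_{1'2}\text{ and }u_{1'2'}\sim u_{12'})$; by the previous sentence the first disjunct amounts to $t_2\sim t'_2$ and the second to $t_1\sim t'_1$, which is exactly $t_1\sim t'_1\lor t_2\sim t'_2$. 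Since this is a chain of equivalences, both directions of the proposition follow simultaneously.

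I expect the only mildly delicate point to be precisely the degenerate cases hidden in the auxiliary fact: when some of $t_1,t'_1,t_2,t'_2$ (or, one level up, some of the $u$-states) happen to be pairwise bisimilar, a distribution $\frac12 x+\frac12 y$ becomes Dirac on a $\sim$-class, and one must check that the stated equivalences survive the collapse, which they do by transitivity of $\sim$. Apart from that, the whole argument is bookkeeping over the eight listed transitions, which is why the proposition is called ``trivially verifiable''.
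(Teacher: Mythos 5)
The paper does not actually prove this proposition: it merely declares both gadget propositions ``trivially verifiable'', so there is no official argument to compare against. Your verification is correct and is the natural way to carry it out --- reduce bisimilarity of the single-transition states to $\sim$-equivalence of their unique target distributions, characterise $\sim$-equivalence of two uniform two-point distributions as equality of the multisets of $\sim$-classes of their supports, and then chase the eight transitions; the cancellation property of multisets handles the degenerate collapsed cases exactly as you say. The only step you assert rather than justify is the ``if'' direction of your opening structural observation, namely that two states, each having a single outgoing $a$-transition, are bisimilar whenever their target distributions are $\sim$-equivalent. The ``only if'' direction is immediate because $\sim$ is itself a bisimulation, but the converse needs a one-line witness: take the equivalence obtained from $\sim$ by merging the classes of the two states $v,w$ and check that it is a bisimulation (any state $\sim$-equivalent to $v$ has all of its transitions labelled $a$ with target distributions $\sim$-equivalent to $v$'s unique target, and likewise for $w$, so after merging the two classes all transitions still match with equivalent distributions). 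With that supplied, your chain of equivalences is a complete proof of both directions simultaneously.
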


\subsection{Bisimilarity of pOCA is PSPACE-hard}\label{sec:pOCA-hard}

In this section we prove the following:
\begin{thm} \label{thm-pOCA-hard}
	Bisimilarity is PSPACE-hard (even) for
 unary (i.e., with only one action) and fully probabilistic
	OCA.
\end{thm}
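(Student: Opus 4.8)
The plan is to take the known PSPACE-hardness reduction for bisimilarity of (non-deterministic) OCA---see~\cite{SrbaVisiblyPDA:2009}, which in turn builds on a construction from~\cite{KuceraM02}---and to re-implement it so that the target is a \emph{fully probabilistic} OCA over a single action, using the AND- and OR-gadgets of Propositions~\ref{prop:ANDgadget} and~\ref{prop:ORgadget} to replace all non-determinism by probabilistic branching. Recall that in the non-probabilistic setting one reduces from a canonical PSPACE-complete problem (the acceptance of a polynomial-space alternating Turing machine, equivalently the truth of a quantified Boolean formula) by constructing two OCA configurations whose bisimulation game simulates the alternating computation: Attacker resolves the universal choices, Defender resolves the existential choices, the counter maintains the $\mathrm{poly}(n)$ bits of state that the simulated device needs, and the two starting configurations are bisimilar iff the instance is positive. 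We want to realise the same game tree while offering each player its choices through probability distributions rather than through several enabled transitions.

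The point that makes this possible is exactly Propositions~\ref{prop:ANDgadget} and~\ref{prop:ORgadget}: a node at which Attacker must establish \emph{all} of several subgoals is encoded by an AND-gadget, and a node at which Defender may select \emph{one} of several subgoals by an OR-gadget; since the gadgets are binary, a $k$-ary choice is obtained by nesting them into a balanced binary tree of depth $O(\log k)$. Each gadget uses only the single action $a$ and equal-probability branching $s\tran{a}t_1\mid t_2$, so the resulting transition system is fully probabilistic and unary, and each gadget contributes only a constant number of fresh states, so a polynomial number of them nested in this way keeps the description polynomial. To stay within the OCA format we use the single counter symbol $I$ (with bottom marker $Z$): all gadget states go into the control unit, every gadget edge leaves the counter unchanged, and the counter is incremented, decremented or zero-tested only at the arithmetic steps inherited from the standard reduction, so all rules have the required shape. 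Thus from an instance $x$ we obtain, in polynomial time, a unary fully probabilistic OCA $\Delta_x$ and configurations $pI^{m}Z$, $qI^{n}Z$ with $m,n$ polynomial in $|x|$ such that $pI^mZ\sim qI^nZ$ in $\calL_{\Delta_x}$ iff $x$ is positive, which gives PSPACE-hardness.

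Two things need care. First, the AND-gadget carries the side hypothesis $t_1\not\sim t_2'$, which has to be re-established at every use when gadgets are nested; this is done by induction on the game tree, the base case being that the ``accepting'' and ``rejecting'' leaves of the encoded computation can be kept non-bisimilar by attaching distinguishing one-counter behaviour (e.g.\ different zero-test or normed/unnormed patterns). Second---and this is the main obstacle---the standard reduction typically relies on several distinct action labels to \emph{synchronise} the two sides of a pair and to rule out spurious Defender responses, and over a single action this synchronisation is gone. The fix is to make the counter and the gadget structure carry that role: the two sides are padded so that whenever Defender could respond with a transition belonging to the ``wrong'' part of the simulated configuration, the resulting counter values (or the reachable gadget leaves) are provably different, hence non-bisimilar. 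Checking that this re-engineering preserves the equivalence ``bisimilar iff positive instance'', with a single action and the counter doing double duty, is where most of the work sits; the gadget propositions reduce each local step to a finite case analysis, but the global bookkeeping of which configuration pairs are (non-)bisimilar throughout the construction is what must be carried out in full detail.
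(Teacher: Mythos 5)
Your overall strategy---eliminate non-determinism from a known PSPACE-hardness construction by routing Attacker's conjunctive choices through the AND-gadget and Defender's disjunctive choices through the OR-gadget---is exactly the right instinct, and your treatment of the side condition $t_1\not\sim t_2'$ (attach distinguishing behaviour at each use) matches what the paper does. But the reduction you sketch does not go through as described, for two concrete reasons.

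First, your source problem is unworkable for a one-counter machine. You propose to simulate a polynomial-space ATM (or QBF) with ``the counter maintain[ing] the $\mathrm{poly}(n)$ bits of state that the simulated device needs.'' A single counter holds one natural number; an OCA cannot randomly access or update individual bits of that number, so it cannot carry a $\mathrm{poly}(n)$-bit configuration in the counter. This is precisely why the known PSPACE-hardness proofs for OCA bisimilarity (and the paper's proof) reduce instead from \emph{emptiness of one-letter-alphabet alternating finite automata} (1L-AFA), which is PSPACE-complete: there the only data beyond the control state is a single number $n$ (the remaining unary word length), which the counter represents directly, decrementing by one per simulated AFA step. The universal quantification over $n$ in the emptiness question is realised by an initial gadget that lets the counter be pumped to an arbitrary value before the simulation starts. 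The paper maintains the invariant $qI^nZ\sim q'I^nZ$ iff $\neg\Acc(q,n)$, with $\lor$-states of the AFA implemented by AND-gadgets and $\land$-states by OR-gadgets (note the dualisation: bisimilarity encodes \emph{non}-acceptance).

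Second, you correctly identify that a single action destroys the label-based synchronisation that multi-action reductions rely on, and you call resolving this ``where most of the work sits''---but you then leave that work undone, offering only that the two sides can be ``padded'' so that wrong responses lead to provably non-bisimilar pairs. That is the crux of the theorem, not a detail to be deferred. In the paper's reduction this difficulty never arises, because the 1L-AFA is already unary: the lock-step progression is enforced automatically by the counter decreasing by exactly one per round and by the rigid two-level structure of the gadgets, so no additional synchronisation mechanism is needed. In short, the gap is not merely missing bookkeeping; the choice of source problem is what makes a unary, fully probabilistic one-counter reduction possible at all, and with your choice the construction cannot be completed.
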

We remark that fully probabilistic PDA with only one action do not have any nondeterminism: they generate a countable-state Markov chain.
\begin{proof}
We use a reduction from the emptiness problem for alternating
finite automata with a one-letter alphabet, known to be
PSPACE-complete~\cite{Holzer95,JancarSawaAFA:2007};
our reduction resembles the reduction in~\cite{SrbaVisiblyPDA:2009}
for (non-probabilistic) visibly one-counter automata.

A \emph{one-letter alphabet alternating finite automaton}, 1L-AFA, is a tuple
$A=(Q,\delta,q_0,F)$ where
$Q$ is the (finite) set of \emph{states},
 $q_0$
is the \emph{initial state}, $F\subseteq Q$ is the set of
\emph{accepting states}, and the \emph{transition
function} $\delta$ assigns to each $q\in Q$
either $q_1\land q_2$ or $q_1\lor q_2$, where $q_1,q_2\in Q$.

We define the predicate $\Acc\subseteq Q\times\N$
by induction on the second component (i.e., the length of a one-letter word);
$\Acc(q,n)$ means
``$A$ starting in $q$ accepts $n$'':
$\Acc(q,0)$ if and only if $q\in F$;
$\Acc(q,n{+}1)$ if and only if
either
$\delta(q)=q_1\land q_2$ and
we have both $\Acc(q_1,n)$ and $\Acc(q_2,n)$, or
 $\delta(q)=q_1\lor q_2$ and we have
$\Acc(q_1,n)$ or $\Acc(q_2,n)$.

The \emph{emptiness problem for 1L-AFA} asks, given a 1L-AFA $A$, if the set
$\{\,n\mid \Acc(q_0,n)\,\}$ is empty.
We now reduce this problem to our problem.

Assuming a 1L-AFA $(Q,\delta,q_0,F)$, we construct a pOCA
	$\Delta=(\overline{Q},\{I,Z\},\{a\},\mathord{\btran{}})$ as
	follows. The state set $\overline{Q}$ contains
	$2|Q|+3$ `basic' states; the set of basic states
is $\{p_0,p'_0,r\}\cup Q\cup Q'$ where
$Q'=\{\,q'\mid q\in Q\,\}$ is a copy of $Q$ and $r$ is a special dead
state.
Additional auxiliary states will be added to implement AND- and
OR-gadgets. We note that $\Delta$ has a singleton action alphabet,
	and it will 
be fully probabilistic.
	Below we describe a construction of $\mathord{\btran{}}$, aiming
to achieve $p_0IZ\sim p_0'IZ$ if and only if $\{\,n\mid
	\Acc(q_0,n)\,\}=\emptyset$; another property will be that
\begin{equation}\label{eq:afapoca}
qI^{n}Z\sim q'I^{n}Z \textnormal{ if and only if } \neg\Acc(q,n).
\end{equation}
For each $q\in F$ we create a rule
$qZ \btran{a} r Z$,
but $qZ$ is dead (i.e., there is no rule $qZ \btran{a}\ldots$) if
$q\not\in F$;  $q'Z$ is dead for any $q'\in Q'$.
Both $rI$ and $rZ$ are dead as well.
Hence
(\ref{eq:afapoca}) is
satisfied for $n=0$.
Now we complete the set $\mathord{\btran{}}$ of rules 
and show that (\ref{eq:afapoca}) also holds for
$n>0$.

For $q\in Q$ with $\delta(q)=q_1\lor q_2$ we implement an
AND-gadget from Figure~\ref{fig:gadgets} (left) guaranteeing
that $qI^{n{+}1}Z\sim q'I^{n{+}1}Z$ if and only if
$q_1I^{n}Z\sim q_1'I^{n}Z$ and $q_2I^{n}Z\sim q_2'I^{n}Z$
(since $\neg\Acc(q,n{+}1)$ if and only if $\neg\Acc(q_1,n)$ and
$\neg\Acc(q_2,n)$):

We add rules
$qI \tran{} r_1I \mid r_2I$ (this is a shorthand for
$qI \btran{a} \frac12 r_1 I + \frac12 r_2 I$)
and $q'I \tran{} r'_1I \mid r'_2I$,
and also $r_1I \tran{} q_1 \mid s_1I$,
$r_2I \tran{} q_2 \mid s_2I$,
 $r'_1I \tran{} q'_1 \mid s_1I$,
$r'_2I \tran{} q'_2 \mid s_2I$,
and finally $s_1I\btran{a} \frac12 s_1I + \frac12 r$,
$s_2I\btran{a}0.4 s_2I + 0.6 r$.
The intermediate states $r_1,r_2,r'_1,r'_2$, and $s_1,s_2$ serve to
implement the condition
$t_1\not\sim t'_2$ from Proposition~\ref{prop:ANDgadget}.

For $q\in Q$ with $\delta(q)=q_1\land q_2$ we (easily) implement an
OR-gadget from Figure~\ref{fig:gadgets} (right)
guaranteeing $qI^{n{+}1}Z\sim q'I^{n{+}1}Z$ if and only if
$q_1I^{n}Z\sim q_1'I^{n}Z$ or $q_2I^{n}Z\sim q_2'I^{n}Z$
(since $\neg\Acc(q,n{+}1)$ if and only if $\neg\Acc(q_1,n)$ or
$\neg\Acc(q_2,n)$).

To finish the construction, we add rules
$p_0I\btran{a} \frac13 p_0 I I + \frac13 q_0 \varepsilon + \frac13 r I$ and
$p_0'I\btran{a} \frac13 p_0' I I + \frac13 q'_0 \varepsilon + \frac13 r I$.
As $p_0 I$ and $p_0' I$ can transition to (the dead) $r I$, the rules added before guarantee that
$p_0I^{n+2}Z\not\sim q'_0I^nZ$ and
$q_0I^{n}Z\not\sim p_0'I^{n+2}Z$.
\end{proof}

\begin{exa} \label{ex-pOCA-hardness}
We illustrate this reduction for the 1L-AFA $(A = \{q_0, q_1, q_2\},\delta,q_0,\{q_2\})$ with $\delta(q_0) = q_1 \land q_2$ and $\delta(q_1) = q_1 \lor q_2$ and $\delta(q_2) = q_1 \lor q_1$.
Here is a visualization of~$A$:
\begin{center}
\begin{tikzpicture}[tran,scale=1.5]
\tikzstyle{every node} = [ran,inner sep=2pt,minimum height=5mm, minimum width=8mm];
\node (q0) at (-0.2,0) {$q_0$};
\node (q1) at (1,0.5) {$q_1$};
\node[double] (q2) at (1,-0.5) {$q_2$};
\draw (-1,0) -- (q0);
\draw[rounded corners=2mm] (q0) to (0.4,0) to (q1);
\draw[rounded corners=2mm] (q0) to (0.4,0) to (q2);
\draw (q2) edge[bend right=15] (q1);
\draw (q1) edge[bend right=15] (q2);
\draw (q1) edge [loop,out=20,in=-20,looseness=10] (q1);
\end{tikzpicture}
\end{center}
Figures \ref{fig-ex-pOCA-hardness1} and \ref{fig-ex-pOCA-hardness2} show parts of the pLTS generated by the pOCA obtained by applying the reduction from the proof of Theorem~\ref{thm-pOCA-hard} to~$A$.
\begin{figure}
\begin{center}
\begin{tikzpicture}[xscale=1.9,yscale=1.3,thick]
\tikzstyle{lran} = [ran,inner sep=2pt,minimum height=5mm, minimum width=8mm];

\node[lran] (q1I)   at (1.5,2){$q_1 I Z$};
\node[lran] (r1I)   at (0.5,1){$r_1 I Z$};
\node[lran] (r2I)   at (2.5,1){$r_2 I Z$};
\node[lran] (q1)    at (0, 0) {$q_1 Z$};
\node[lran] (s1I)   at (1, 0) {$s_1 I Z$};
\node[lran] (q2)    at (2, 0) {$q_2 Z$};
\node[lran] (s2I)   at (3, 0) {$s_2 I Z$};

\node[lran] (q1'I)   at (5.5,2){$q_1' I Z$};
\node[lran] (r1'I)   at (4.5,1){$r_1' I Z$};
\node[lran] (r2'I)   at (6.5,1){$r_2' I Z$};
\node[lran] (q1')   at (4, 0) {$q_1' Z$};
\node[lran] (s1I')  at (5, 0) {$s_1 I Z$};
\node[lran] (q2')   at (6, 0) {$q_2' Z$};
\node[lran] (s2I')  at (7, 0) {$s_2 I Z$};

\node[lran] (rZ)    at (2,-1) {$r Z$};
\node[lran] (rZ')   at (6,-1) {$r Z$};

\draw[tran] (q1I) edge (r1I) edge (r2I);
\draw[tran] (r1I) edge (q1) edge (s1I);
\draw[tran] (r2I) edge (q2) edge (s2I);

\draw[tran] (s1I) edge [loop,out=-110,in=-70,looseness=20] (s1I);
\draw[tran] (s1I) -- (rZ);
\draw[tran] (q2)  -- (rZ);
\draw[tran] (s2I) edge node[pos=0.4,below,xshift=1mm] {$0.6$} (rZ);
\draw[tran] (s2I) edge [loop,out=-110,in=-70,looseness=20] node[below] {$0.4$} (s2I);

\draw[tran] (q1'I) edge (r1'I) edge (r2'I);
\draw[tran] (r1'I) edge (q1') edge (s1I');
\draw[tran] (r2'I) edge (q2') edge (s2I');

\draw[tran] (s1I') edge [loop,out=-110,in=-70,looseness=20] (s1I');
\draw[tran] (s1I') -- (rZ');
\draw[tran] (s2I') edge node[pos=0.4,below,xshift=1mm] {$0.6$} (rZ');
\draw[tran] (s2I') edge [loop,out=-110,in=-70,looseness=20] node[below] {$0.4$} (s2I');

\end{tikzpicture}
\end{center}
\caption{A part of the pLTS generated by the pOCA obtained by applying the reduction from the proof of Theorem~\ref{thm-pOCA-hard} to the 1L-AFA $A$ from Example~\ref{ex-pOCA-hardness}.
Unless indicated otherwise, for each state there is a uniform distribution on the outgoing transitions.
For better readability, some states appear twice.
}
\label{fig-ex-pOCA-hardness1}
\end{figure}
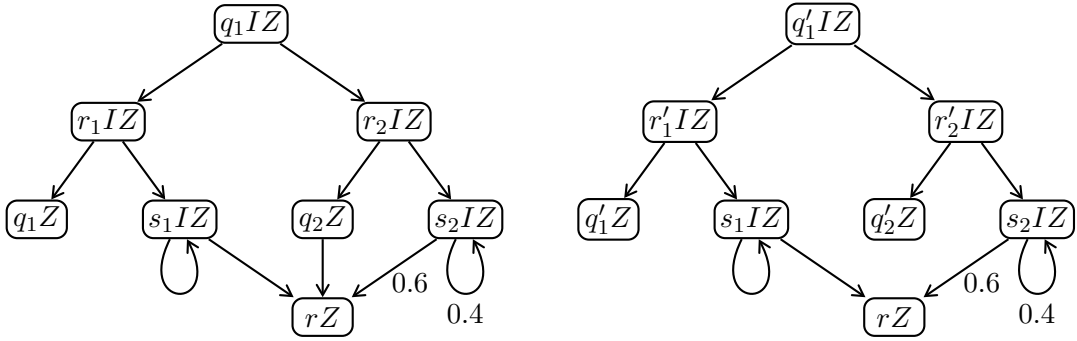
It can be seen in Figure~\ref{fig-ex-pOCA-hardness1} that $q_1 Z \sim q_1' Z$ and $q_2 Z \not\sim q_2' Z$, reflecting the facts that $\neg\Acc(q_1,0)$ and $\Acc(q_2,0)$.
Therefore $r_1 I Z \sim r_1' I Z$ and $r_2 I Z \not\sim r_2' I Z$.
Corresponding to the transition $\delta(q_1) = q_1 \lor q_2$, Figure~\ref{fig-ex-pOCA-hardness1} shows an AND-gadget, see Figure~\ref{fig:gadgets} (left).
Thus we have $q_1 I Z \not\sim q_1' I Z$, reflecting the fact that $\Acc(q_1,1)$.

\begin{figure}
\begin{center}
\begin{tikzpicture}[xscale=1.9,yscale=1.3,thick]
\tikzstyle{lran} = [ran,inner sep=2pt,minimum height=5mm, minimum width=8mm];

\node[lran] (q0IZ)   at (0.5,2) {$q_0 I Z$};
\node[lran] (q0'IZ)  at (2.5,2) {$q_0' I Z$};

\node[lran] (1)      at ( 0, 1) {};
\node[lran] (2)      at ( 1, 1) {};
\node[lran] (3)      at ( 2, 1) {};
\node[lran] (4)      at ( 3, 1) {};

\node[lran] (q1Z)    at ( 0, 0) {$q_1 Z$};
\node[lran] (q1'Z)   at ( 1, 0) {$q_1' Z$};
\node[lran] (q2Z)    at ( 2, 0) {$q_2 Z$};
\node[lran] (q2'Z)   at ( 3, 0) {$q_2' Z$};
\node[lran] (rZ)     at ( 2,-1) {$r Z$};

\draw[tran] (q0IZ)  edge (1) edge (2);
\draw[tran] (q0'IZ) edge (3) edge (4);

\draw[tran] (1) edge (q1Z) edge (q2Z);
\draw[tran] (2) edge (q1'Z) edge (q2'Z);
\draw[tran] (3) edge (q1Z) edge (q2'Z);
\draw[tran] (4) edge (q1'Z) edge (q2Z);

\draw[tran] (q2Z) -- (rZ);
\end{tikzpicture}
\end{center}
\caption{A part of the pLTS generated by the pOCA obtained by applying the reduction from the proof of Theorem~\ref{thm-pOCA-hard} to the 1L-AFA $A$ from Example~\ref{ex-pOCA-hardness}.
For each state there is a uniform distribution on the outgoing transitions.
}
\label{fig-ex-pOCA-hardness2}
\end{figure}
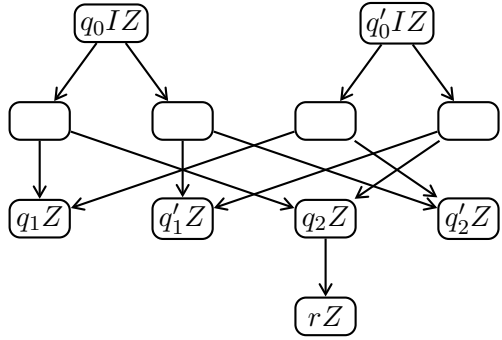
In Figure~\ref{fig-ex-pOCA-hardness2} it can be seen again that $q_1 Z \sim q_1' Z$ and $q_2 Z \not\sim q_2' Z$, reflecting the facts that $\neg\Acc(q_1,0)$ and $\Acc(q_2,0)$.
Corresponding to the transition $\delta(q_0) = q_1 \land q_2$, Figure~\ref{fig-ex-pOCA-hardness1} shows an OR-gadget, see Figure~\ref{fig:gadgets} (right).
Thus we have $q_0 I Z \sim q_0' I Z$, reflecting the fact that $\neg\Acc(q_0,1)$.
\end{exa}

\subsection{Bisimilarity of pvPDA is EXPTIME-hard}\label{sec:pvPDA-hard}

In this section we prove the following:
\begin{thm}\label{thm-pvPDA-hard}
 Bisimilarity is EXPTIME-hard
	(even) for fully probabilistic pvPDA 
	$\Delta=(Q,\Gamma,\Sigma,\mathord{\btran{}})$
	with $|\Sigmar| = |\Sigmai| = |\Sigmac| = 1$.
\end{thm}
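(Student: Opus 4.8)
\emph{Plan.} I would reduce from an EXPTIME-complete problem of ``alternating'' character that involves a stack; the most convenient is the acceptance problem for \emph{alternating linear bounded automata} (ALBA), equivalently the word problem for alternating polynomial-space Turing machines, which is EXPTIME-complete. This is the exponential analogue of the PSPACE-complete emptiness problem for one-letter alternating finite automata used for pOCA in Theorem~\ref{thm-pOCA-hard}: there the counter held the word length $n$, whereas now the stack of the pvPDA will hold the length-$n$ work tape of the ALBA. (Equivalently one could start from ``does Eve win a given pushdown reachability game'', which is also EXPTIME-complete, but the ALBA route makes the stack usage concrete.)

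Given an ALBA $M$ and an input $x$ of length $n$, I would build a fully probabilistic pvPDA $\Delta=(Q,\Gamma,\Sigma,\mathord{\btran{}})$ together with two ``primed/unprimed'' start configurations whose bisimilarity will encode \emph{non}-acceptance of $x$, in direct analogy with property~(\ref{eq:afapoca}) in the pOCA proof. A configuration of $M$ (control state, head position, length-$n$ tape content) is represented by storing the tape content as a string of length $n$ on the stack and keeping the finite control, the head position, and a bounded amount of bookkeeping in~$Q$. One step of $M$ --- inspect the symbol under the head, rewrite it, move left or right, change state --- is simulated by a short, stack-discipline-respecting ``macrostep'' of $\Delta$-transitions. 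The point that makes the construction \emph{visibly} is to schedule each macrostep so that stack-increasing and stack-decreasing transitions occur in separate phases (counted in $Q$), so that one may assign the single call action $\Sigmac$ to every stack-increasing transition, the single return action $\Sigmar$ to every stack-decreasing transition, and the single internal action $\Sigmai$ to all remaining transitions (including the gadget-internal ones). This yields $|\Sigmar|=|\Sigmai|=|\Sigmac|=1$.

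Alternation of $M$ is handled with the AND-/OR-gadgets of Figure~\ref{fig:gadgets}, exactly as in Theorem~\ref{thm-pOCA-hard} and following~\cite{CBW12}: at each universal state of $M$ I insert an AND-gadget and at each existential state an OR-gadget, with the two successor configurations of $M$ occupying the leaf positions of two copies, so that by Propositions~\ref{prop:ANDgadget} and~\ref{prop:ORgadget} bisimilarity of the gadget roots $s,s'$ equals the conjunction (resp.\ disjunction) of the bisimilarities of the leaves. As there, the ``distinguishing'' pair $t_1\not\sim t'_2$ demanded by the AND-gadget is planted by short dead chains with different probabilistic profiles (the analogues of the states $s_1,s_2$). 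Since all gadget transitions are internal, they respect the visibly discipline and do not violate the ``only transitions out of $s,s'$ are $\dots$'' hypotheses of the two propositions; and since every state has at most one transition per action and all branching runs through the fixed uniform two-point distributions of the gadgets, $\Delta$ is fully probabilistic. Halting configurations are made bisimilar to their copy iff rejecting, by attaching on one side only one extra transition to a dead state --- the analogue of the rule $qZ\btran{a}rZ$ for $q\in F$. Correctness then follows by induction over $M$'s computation, on a rank such as ``distance to acceptance'' (paralleling the induction on $n$ that gave~(\ref{eq:afapoca})): for each reachable configuration $c$ of $M$ the corresponding states $\overline c,\overline c'$ of $\Delta$ satisfy $\overline c\sim\overline c'$ iff $M$ rejects from $c$, so the two start configurations are bisimilar iff $M$ rejects $x$. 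The reduction is plainly polynomial-time, and together with Theorem~\ref{thm-upper-bounds-visibly} this gives EXPTIME-completeness in the stated restricted setting.

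\textbf{Main obstacle.} The delicate part is keeping the simulation \emph{visibly} with only three action letters. The symbol $M$ must read lies at the tape position equal to the head position, hence is generally buried under other stack symbols; extracting it, rewriting it, and moving the head forces a partial traversal of the stack, which pushes and pops auxiliary symbols. One must schedule these so that no push-phase and pop-phase interleave within a macrostep, and --- more delicately --- so that the primed and unprimed sides stay perfectly synchronised throughout the traversal, leaving Attacker no transient mismatch to exploit; this needs careful position-counting in $Q$ and a few stack markers. Checking that the inserted gadgets introduce no stray transitions breaking the single-letter visibly constraint is a comparatively minor additional verification.
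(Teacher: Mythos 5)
There is a genuine gap at the heart of your reduction: a (visibly) pushdown automaton cannot simulate an alternating linear bounded automaton in the direct, step-by-step way you describe. If the length-$n$ work tape is stored on the stack and the head currently sits at a cell buried at depth $k$, then to inspect and rewrite that cell the automaton must pop the $k$ symbols above it --- and a PDA has nowhere to put them. The finite control can count positions (that only needs $|Q|$ polynomial in $n$), but it cannot remember the \emph{contents} of the popped prefix, which ranges over exponentially many strings; and pushing ``auxiliary symbols'' during the traversal does not help, because the original symbols are already gone. This is exactly why a single stack is strictly weaker than an LBA, and why the known ALBA-based hardness proof for PDA bisimilarity (Ku\v{c}era--Mayr, the one the paper calls ``rather tedious'') does \emph{not} simulate the machine on one stack: it uses the bisimulation game itself, letting Defender's choices guess successor configurations and Attacker's choices verify consistency between the two configurations of the current pair. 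Your ``main obstacle'' paragraph correctly senses the difficulty but the proposed fix (``position-counting in $Q$ and a few stack markers'') does not address where the popped tape content is stored, so the macrostep you rely on cannot be implemented. The surrounding scaffolding (AND/OR gadgets per Propositions~\ref{prop:ANDgadget} and~\ref{prop:ORgadget}, the primed/unprimed copies, induction on distance to acceptance) is fine, but it has nothing to attach to.

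The paper takes the route you mention only in passing and set aside: it reduces from the winner problem for pushdown reachability games \cite{Walukiewicz2001}. There the object being simulated is itself a pushdown process, so every simulation step is local to the top of the stack --- no traversal is ever needed. After normalising so that each head $pX$ has at most two rules, both pushing a single symbol, the AND-gadget (for Player~1's heads) and OR-gadget (for Player~0's heads) are inserted using fresh control states that keep the top symbol $X$ unchanged; hence all gadget transitions are internal actions, the one push or pop of the original rule happens in a single final step, and the visibly partition with $|\Sigmar|=|\Sigmai|=|\Sigmac|=1$ falls out immediately. If you want to keep an ALBA-style source problem you would have to adopt the Ku\v{c}era--Mayr game-based encoding rather than a direct simulation; otherwise, switching to the pushdown-game source is the repair that makes the rest of your argument go through essentially as you wrote it.
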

It was shown in~\cite{SrbaVisiblyPDA:2009} that bisimilarity for (non-probabilistic) vPDA is EXPTIME-complete.
The hardness result there follows by observing that the proof given 
in~\cite{KuceraM02}
(see also~\cite{KuceraM10}) for general PDA works even for vPDA. 
Referring to~\cite{KuceraM02}, 
 it is commented in~\cite{SrbaVisiblyPDA:2009}: ``Though conceptually elegant, the technical details of the reduction are rather tedious.''
For those reasons we give a full reduction from the problem of
determining the winner in a reachability game on pushdown processes;
this problem was shown EXPTIME-complete in~\cite{Walukiewicz2001}.
Our reduction proves Theorem~\ref{thm-pvPDA-hard}
and at the same time provides an alternative proof for (standard)
vPDA.
\begin{proof}[Proof of Theorem~\ref{thm-pvPDA-hard}]
Let $\Delta = (Q,\Gamma,\{a\},\mathord{\btran{}})$ be a unary PDA
	(the actions do not matter in reachability games)	
with
 a control state partition $Q = Q_0 \cup Q_1$ and an initial configuration $p_0 X_0$.
We call a configuration $p X \alpha$ \emph{dead} if it has no successor configuration,
 i.e., if $\Delta$ does not have a rule with $p X$ on the left-hand side.
Consider the following game between Player~0 and Player~1 on the 
	LTS~$\calL_{\Delta}$:
First the current configuration is $p_0 X_0$;
in a current configuration $p \alpha$ with $p \in Q_i$ (where $i \in \{0,1\}$),
 Player~$i$ chooses a successor configuration of~$p \alpha$ 
	in~$\calL_{\Delta}$ as a new current configuration.
The goal of Player~$1$ is to reach a dead configuration; the goal of Player~$0$ is to avoid that.
It is shown in~\cite[pp.~261--262]{Walukiewicz2001} that determining the winner in that game is EXPTIME-hard.

W.l.o.g.\ we assume that for each 
	$pX\in Q\Gamma$ there are \emph{at most} two rules in $\mathord{\btran{}}$ with 
	$pX$ on the left-hand side; moreover, if they are two, then
	they are of the form $pX\btran{a}p_1X_1$ and 
	$pX\btran{a}p_2X_2$ (for $X_1,X_2\in\Gamma$).
We further assume that no configuration with the empty stack is
	reachable from $p_0X_0$.

	We will construct a fully probabilistic vPDA $\bar \Delta =
	(\bar Q, \Gamma, \{\ar, \ai, \ac\}, \mathord{\ctran{}})$
	such that for each control state $p \in Q$ the set~$\bar Q$
	includes $p$ and a copy~$p'$, and
	the configurations $p_0 X_0$ and~$p_0' X_0$
	of~$\bar \Delta$ are bisimilar if and only if Player~$0$
	has a winning strategy (in the reachability game from
	$p_0X_0$):
	\begin{itemize}	
		\item
			For each $p X \in Q\Gamma$ that is dead in~$\Delta$, in $\bar \Delta$ we 
	create a rule $p X \ctran{\ai} p X$ 
 and a rule $p' X \ctran{\ai} z X$ where $z \in \bar Q$ is a special control state not occurring on any left-hand side.
This ensures that if $p X$ is dead in~$\Delta$ (and hence Player~$1$
	wins), then we have $p X \not\sim p' X$ in~$
	\calL_{\bar \Delta}$.
\item
	For each rule $p X\btran{a}q\alpha$ 
	such that $pX$ is not the left-hand side of any other rule in
	$\Delta$, in $\bar \Delta$
 we create rules $p X \ctran{a} q \alpha$ and $p' X \ctran{a} q' \alpha$,
 where $a = \ar, \ai, \ac$ if $|\alpha| = 0, 1, 2$, respectively.
\item
	For each pair of (different) rules 
			$p X\btran{a}p_1X_1$,	$p X\btran{a}p_2X_2$:
\begin{itemize}
 \item If $p \in Q_0$, then we implement an OR-gadget from
	 Figure~\ref{fig:gadgets} (right):
\\		
  let $(p_1X_1p_2X_2), (p_1'X_1p_2'X_2), (p_1X_1p_2'X_2), (p_1'X_1p_2X_2) \in \bar Q$ be fresh control states,
  and add rules
		\begin{itemize}		
			\item
				$p X \ctran{} (p_1X_1p_2X_2) X \mid (p_1'X_1p_2'X_2) X$
\\				
   (this is a shorthand for $p X \ctran{\ai} 0.5 (p_1X_1p_2X_2) X + 0.5 (p_1'X_1p_2'X_2)
				X$), 
			\item
  $p' X \ctran{} (p_1X_1p_2'X_2) X
				\mid (p_1'X_1p_2X_2) X$,
			\item
   $(p_1X_1p_2X_2) X \ctran{} p_1 X_1 \mid p_2 X_2$,
   $(p_1'X_1p_2'X_2) X \ctran{} p_1' X_1 \mid p_2' X_2$,
  \\
				$(p_1X_1p_2'X_2) X \ctran{} p_1 X_1 \mid p_2' X_2$,
   $(p_1'X_1p_2X_2) X \ctran{} p_1' X_1 \mid p_2 X_2$.
	\end{itemize}
\item If $p \in Q_1$ we implement an AND-gadget from
	Figure~\ref{fig:gadgets} (left):
\\
		let $(p_1X_1), (p_1'X_1), (p_2X_2), (p_2'X_2) \in \bar Q$ be fresh control states,
  and add rules 
		\begin{itemize}		
			\item
				$p X \ctran{} (p_1X_1) X \mid (p_2X_2) X$
            and $p' X \ctran{} (p_1'X_1) X \mid (p_2'X_2) X$,
\item
   $(p_1X_1) X \ctran{\ai} p_1 X_1$ and
   $(p_1'X_1) X \ctran{\ai} p_1' X_1$,
\item				
   $(p_2X_2) X \ctran{} p_2 X_2 \mid z X$ and
   $(p_2'X_2) X \ctran{} p_2' X_2 \mid z X$.
		\end{itemize}
				Here, the transitions to~$z X$ serve to implement the condition $t_1\not\sim t'_2$ from Proposition~\ref{prop:ANDgadget}.

		\end{itemize}
\end{itemize}
An induction argument now easily establishes that
 $p_0 X_0 \sim p_0' X_0$ holds in~$\bar\Delta$ if and only if
 Player~$0$ has a winning strategy in the game in~$\Delta$.

We note that the same reduction works for non-probabilistic vPDA,
 if the probabilistic branching is replaced by non-deterministic
 branching; we thus get the mentioned alternative proof of
 EXPTIME-hardness for standard vPDA.
\end{proof}

\section{Conclusion}\label{sec:conclusion}
There are a number of variants of standard (non-deterministic)
pushdown automata for which the problem of checking bisimilarity is
elementary.  For three of the most prominent such
classes---one-counter automata, visibly pushdown automata, and basic
process algebra---we have shown that checking bisimilarity for the
probabilistic extensions incurs no cost in computational
complexity over the standard case.  More precisely, for the respective
probabilistic extensions of these three models we recover the same
complexity upper bounds for checking bisimilarity as in the standard
case.  Thus the message of this paper is that adding probability comes
at no extra cost for checking bisimilarity of pushdown automata.

At a technical level, the paper is constructed around a simple
equivalence-preserving transformation that eliminates probabilistic
transitions.  However since this transformation incurs an exponential
blow-up when used as a black-box, we have had to resort to bespoke
arguments for each subclass of pPDA in order to obtain optimal
complexity bounds from our basic underlying reduction. 
We also note that hitherto rather bespoke proofs
were used in this area~\cite{Brazdil08,FuKatoen11}.
Thus a natural
question arising from this work is whether one can identify general
conditions on a class of pPDA that enable this ``bisimilarity
reduction'' to go through without incurring an exponential blow-up.

A second main message of this paper is that checking bisimilarity for
(subclasses of) PDA is no easier in the fully probabilistic case than
in the standard case.  This contrasts with the situation for
language equivalence (e.g., deciding language equivalence of
non-deterministic finite automata is PSPACE-complete, whereas the
natural analog of language equivalence for fully probabilistic finite
automata is decidable in polynomial time~\cite{Tzeng}).  In light of
this, another interesting question is whether language
equivalence of fully probabilistic PDA (without $\varepsilon$-transitions) is decidable.
This question is currently open to the best of our knowledge
(and related to other problems in language
theory~\cite{DBLP:journals/iandc/ForejtJKW14}).

\bigskip
\noindent
{\bf Acknowledgements.\quad}
Vojt\v{e}ch Forejt was at Oxford University when most of the research was carried out.
Petr Jan\v{c}ar was at Techn. Univ. Ostrava,
supported by the grant GA\v{C}R:15-13784S (finishing in 2017)
of the Grant Agency of the Czech Rep.
Stefan Kiefer is supported by the Royal Society.

\bibliographystyle{alpha}
\bibliography{db}

\end{document}